\def\TRUE{1}		
\def\CompilationDisplayColor{\TRUE}		
\acrodef{SPD}{symmetric positive definite}
\acrodef{HPD}{Hermitian positive definite}
\acrodef{PT}{parallel transport}
\acrodef{AI}{affine-invariant}
\acrodef{LE}{log-Euclidean}
\acrodef{BW}{Bures-Wasserstein}
\acrodef{DoA}{direction-of-arrival}
\acrodef{SoI}{signal-of-interest}
\acrodef{SNR}{signal to noise ratio}
\acrodef{CB}{conventional beamformer}
\acrodef{MVDR}{minimum variance distortionless response}
\acrodef{CF}{covariance fitting}
\acrodef{CRB}{Cram\'er-Rao bound}
\acrodef{RF}{radio-frequency}
\acrodef{ULA}{uniform linear array}
\acrodef{GLC}{generalized sidelobe canceller}
\acrodef{LCMV}{linear constraint minimum variance}
\acrodef{HPBW}{half-power beamwidth}
\def\a{{\bf a}}
\def\n{{\bf n}}
\def\u{{\bf u}}
\def\y{{\bf y}}
\def\I{{\bf I}}
\def\M{{\bf M}}
\def\R{{\bf R}}
\def\U{{\bf U}}
\def\W{{\bf W}}
\def\Rh{{\hat{\bf R}}}
\def\Rb{{\bar{\bf R}}}
\def\0b{{\bf 0}}
\newcommand{\bmat}[1]{\begin{bmatrix}#1\end{bmatrix}} 
\DeclareMathOperator*{\argmin}{argmin}
\DeclareMathOperator*{\Real}{Re}
\DeclareMathOperator*{\Trace}{Tr}
\DeclareMathOperator*{\Det}{Det}
\newtheorem{proposition}{Proposition}
\newtheorem{definition}{Definition}
\begin{document}



\title{
Riemannian Covariance Fitting for Direction-of-Arrival Estimation
}
\author{Joseph S. Picard, Amitay Bar, Ronen Talmon
\thanks{
This work was supported by the European Union’s Horizon 2020 research and
innovation program under grant agreement No. 802735-ERC-DIFFOP.
}
\thanks{The authors are with the Viterbi Faculty of Electrical and Computer Engineering, Technion - Israel Institute of Technology, Haifa 32000, Israel (Corresponding author: Joseph S. Picard, email: josephp@technion.ac.il).}
}


\maketitle


\begin{abstract}

Covariance fitting (CF) is a comprehensive approach for direction of arrival (DoA) estimation, consolidating many common solutions. Standard practice is to use Euclidean criteria for CF, disregarding the intrinsic Hermitian positive-definite (HPD) geometry of the spatial covariance matrices. We assert that this oversight leads to inherent limitations.
In this paper, as a remedy, we present a comprehensive study of the use of various Riemannian metrics of HPD matrices in CF. We focus on the advantages of the Affine-Invariant (AI) and the Log-Euclidean (LE) Riemannian metrics. Consequently, we propose a new practical beamformer based on the LE metric and derive analytically its spatial characteristics, such as the beamwidth and sidelobe attenuation, under noisy conditions. Comparing these features to classical beamformers shows significant advantage. In addition, we demonstrate, both theoretically and experimentally, the LE beamformer's robustness in scenarios with small sample sizes and in the presence of noise, interference, and multipath channels.

\end{abstract}

\begin{IEEEkeywords}
Direction-of-arrival, beamforming, covariance fitting, Riemannian metric, MVDR and Capon's beamformer.
\end{IEEEkeywords}


\section{Introduction}
\label{Section:Introduction}

\IEEEPARstart{T}{he} problem of \ac{DoA} estimation has gained significant attention from the research community in recent decades, thanks to the emergence of new challenging applications. 
In the context of radio signals, 
the \ac{DoA} problem consists of estimating the direction of emitters using their signals intercepted by a phased-array built of synchronized sensor elements placed at known locations. 
\ac{DoA} estimation is long-standing and well researched \cite{VanTrees,stoica2005spectral,BibList:Balanis,BibList:Mailloux,BibList:Wax,krim1996two,BibList:Pesavento,chakrabarty2017broadband,chakrabarty2019multi}, yet, it is still considered an open problem with many outstanding challenges involving high levels of noise, small sample size, multiple sources, and multipath channels.
Classical \ac{DoA} estimators were originally developed by invoking statistical, spectral, or spatial principles to design beamformers with some desired properties \cite{krim1996two,BibList:Li,BibList:Schmidt,BibList:WeissMODE,BibList:Viberg,stoica2005spectral}. 
Subsequently, it has been shown that many of these classical algorithms can be recast as \ac{CF} problems \cite{BibList:Ottersten0}, or as variants of \ac{CF}, such as signal or noise subspace fitting problems \cite{BibList:Ottersten,BibList:WaxSSM}. 
Since then, \ac{CF} has assumed a central role in \ac{DoA} estimation both directly \cite{BibList:Pesavento,BibList:Ottersten0,BibList:Trinh,BibList:Schenck,BibList:Yardibi} and indirectly \cite{BibList:Viberg,BibList:Ottersten,BibList:WaxSSM,BibList:Picard,BibList:Zheng,BibList:Delmer}. 
For example, recently, \ac{CF} was used as the underlying principle for improved or sparse representation of spatial spectrum \cite{BibList:Trinh,BibList:Schenck,BibList:Yardibi}, both with LASSO regularization \cite{BibList:Lasso} or without \cite{BibList:Picard,BibList:Zheng,BibList:Delmer}.

\IEEEpubidadjcol 

\ac{CF} estimates the \ac{DoA} by identifying the direction at which the sample covariance matrix of the intercepted signal best matches a direction-dependent model. Mathematically, it consists of fitting the sample covariance matrix $\Rh$ to its direction-dependent model $\R(\theta)$, where $\theta$ is the direction\footnote{For brevity, we will omit the explicit dependence of $\R$ on $\theta$ in the sequel.}.
For example, the \ac{CB} (or delay-and-sum beamformer) for \ac{DoA} estimation, which is characterized by robustness to small sample sizes, can be recast using \ac{CF} by minimizing the Frobenius norm of the difference $\|\Rh-\R\|{}_F$, or from a geometric standpoint, their Euclidean distance. Similarly, the \ac{MVDR} beamformer, which is also known as the Capon beamformer and is typically characterized by a narrow mainlobe, is obtained by minimizing $\|\Rh^{-1}-\R^{-1}\|_F$ (see details in Section \ref{Section:ProblemFormulation}). 
The commonality between the two minimization criteria raises the question whether the Euclidean metric is necessarily the most appropriate metric for \ac{CF}. 
Since the Euclidean distance (i.e., the Frobenius norm) between matrices equals the Euclidean distance between their vectorization and since it is computed element-wise, we argue that alternative distances that specifically take into account the structure of covariance matrices should be considered. Subsequently, we foster the search and analysis of such distances in the context of \ac{CF}.

Another argument that illustrates the potential inadequacy of the Euclidean distance is the following. 
The HPD covariance matrices $\Rh$ and $\R$ are invertible, and since matrix inversion is a lossless manipulation that preserves the underlying information of full-rank matrices, then $\Rh$ and $\R$ embody the same information as their inverse matrices.
Nevertheless, the two criteria, $\min \|\Rh-\R\|{}_F$ and $\min \|\Rh^{-1}\!-\!\R^{-1}\|_F$, yield two fundamentally different solutions (the \ac{CB} and the \ac{MVDR} beamformer, respectively, with an inherent tradeoff between mainlobe width and robustness to small sample sizes), while, from an information standpoint, they are similar.
Furthermore, many standard signal processing procedures apply an invertible weighting matrix $\W$ to the array-received signals, for example, tapering (windowing functions, e.g., Hamming, Blackman, etc.) that lowers the sidelobe levels at the expense of mainlobe broadening \cite{VanTrees}.
Indeed, $\|\Rh-\R\|{}_F \neq \| \W\Rh\W^H \!-\! \W\R\W^H\|_F$, and therefore, through \ac{CF}, lead to different beamformers with an embedded tradeoff between the sidelobe level and the mainlobe width.
It is therefore plausible that the two tradeoffs indicated above stem from the use of the Euclidean distance. We will show that using non-Euclidean distances between the covariance matrices mitigates these tradeoffs.

The search for an alternative distance for \ac{CF} naturally aligns with the particular geometry of covariance matrices. Indeed, the covariance matrices $\Rh$ and $\R$ are \ac{HPD}, thereby lying on the Riemannian manifold constituted by the \ac{HPD} space 
endowed with a Riemannian metric \cite{pennec2006riemannian,BibList:Bhatia}.
%
%
Research involving the geometry of \ac{HPD} matrices has drawn a lot of interest in recent years, introducing a plethora of new Riemannian geometries that include new metrics, geodesics, and distances, defined on the space of \ac{HPD} matrices \cite{thanwerdas2022bures,han2021generalized}. These Riemannian geometries have been employed in a broad range of applications, e.g., in computer vision and medical data analysis, showing significant advantages over their Euclidean counterparts \cite{barachant2011multiclass,barachant2013classification,rodrigues2017dimensionality,rodrigues2018riemannian,yair2019parallel}.

In this paper, we present a general framework for \ac{CF}, which accommodates both Euclidean and Riemannian metrics. 
We utilize this framework to specify the advantages of the \ac{AI} metric \cite{BibList:Bhatia,pennec2006riemannian,thanwerdas2019affine}, which is arguably the most popular Riemannian metric, over the Euclidean metric in the context of \ac{DoA} estimation problems. In addition, we study other Riemannian metrics and establish their exact relationships to classical \ac{DoA} methods. Then, we focus on the \ac{LE} metric \cite{BibList:Arsigny,arsigny2007geometric}, which approximates the \ac{AI} metric on the Riemannian manifold of \ac{HPD} matrices and has more efficient implementations, and propose a new beamformer induced by this \ac{LE} metric. 
We analytically derive closed-form expressions for the spatial characteristics of the new \ac{LE} beamformer, such as the beamwidth and sidelobe attenuation. We then compare these expressions with that of classical estimators in noisy conditions using new analytic expressions that account for noise, thereby revealing its advantages over classical estimators in terms of these characteristics.
We theoretically and experimentally showcase the robustness of the \ac{LE} beamformer in scenarios characterized by small sample size, the presence of noise and interfering sources, and multipath channels.
Expanding our investigation beyond Riemannian metrics, we also examine the utilization of other non-Euclidean metrics, such as the Kullback-Liebler divergence \cite{BibList:Chevallier}, 
and the Log-Determinant distance \cite{sra2012new}.

It is noteworthy that in the last few years, several pioneering papers on \ac{CF} using Riemannian distances rather than the classical Euclidean distance have been published. 
However, a comprehensive framework for the Riemannian approach has not yet been presented. 
Indeed, in \cite{BibList:Coutino}, Coutino et al. introduced a new beamformer, which was inspired by Information Geometry and is tightly related to the \ac{AI} metric. However, this beamformer considers only a degenerate approximation of the \ac{AI} distance, and it also requires incident signals with unit power. In \cite{BibList:Dong}, the authors partly circumvented this overly stringent requirement by introducing a scaling parameter. In their work, the sources are not required to have unit power, but they are still required to have identical power. In \cite{BibList:Chahrour1}, Chahrour et al. introduced a \ac{DoA} estimator for radar signals relying on the definition of the AI distance instead of its \ac{LE} approximation. 
However, this method holds only in the case of Toeplitz covariance matrices, and the Toeplitz property is exploited in a preliminary step for the evaluation of the reflection coefficients of incident radar signals. 
As a consequence, the incident power is treated as a known parameter during the \ac{DoA} estimation procedure. 
Another related work is \cite{BibList:AmirWeiss}, where the Kullback-Leibler divergence between distributions was considered as a criterion for \ac{CF} in the context of \ac{DoA} estimation of acoustic emitters. This has a tight connection to the present work because the \ac{AI} distance between covariance matrices can be cast as a Fisher-Rao divergence between normal distributions parameterized by these matrices \cite{BibList:Pinele}.
Extending the scope of \ac{CF}, the Riemannian mean of covariance matrices has also been considered in \cite{bar2023interference}, where improved robustness against interfering sources in \ac{DoA} estimation was shown. 


With respect to these existing works \cite{BibList:Coutino,BibList:Dong,BibList:Chahrour1}, our contribution is threefold.
First, we present a geometric \ac{CF} framework for \ac{DoA} estimation, which provides new insights into existing methods and facilitates the development of a new class of Riemannian geometry-based methods. We exploit this framework to highlight the advantages of Riemannian metrics over the Euclidean metric.
Second, we propose a new \ac{DoA} estimator that relies on the \ac{LE} Riemannian metric and compare it analytically to other metrics and beamformers using closed-form expressions, thereby demonstrating analytically its advantages.
Third, we show that the spatial characteristics (e.g., the beamwidth and sidelobes attenuation) of the proposed \ac{LE} estimator are advantageous at low \ac{SNR} and small sample size.

The remainder of this paper is structured as follows. 
In Section \ref{Section:ProblemFormulation}, we recast classical beamformers as \ac{CF} problems involving the Euclidean metric, 
thereby defining a \ac{CF} framework for \ac{DoA} estimation. 
We highlight some limitations that stem from the Euclidean metric when used as a fitting criterion in \ac{DoA} estimation problems, 
and we present how Riemannian metrics overcome these limitations, putting a special focus on the \ac{AI} metric. 
In Section \ref{Section:LogEuclideanBeamformer}, 
we present the \ac{LE} beamformer, 
derived for \ac{DoA} estimation, along with analytical expressions for some characteristics of the spatial spectrum. 
In Section \ref{Section:Alternative_Riemannian_Metrics}, 
we show that classical non-Euclidean distances of \ac{HPD} matrices systematically yield the \ac{CB} and the \ac{MVDR} beamformer, or variants of these two, thereby further demonstrating the novelty of the \ac{LE} beamformer.
In Section \ref{Section:Simulations_and_Numerical_Examples}, we present simulations that showcase the performance and analysis of the proposed \ac{LE} beamformer.

\section{Covariance Fitting Framework for DoA Estimation}
\label{Section:ProblemFormulation}

\subsection{Notations \& Acronyms}\label{Section:Notations}
Unless stated otherwise, we use the following notation. Lowercase letters are used for scalars (e.g., $s$), uppercase letters for integers (e.g., $K$), lowercase bold letters for vectors (e.g., ${\bf v}$), and uppercase bold letters for matrices (e.g., ${\bf M}$).
The conjugate, transpose, conjugate transpose, and pseudo-inverse operators are denoted by the superscripts $(\cdot)^*$, $(\cdot)^T$, $(\cdot)^H$, and $(\cdot)^\dagger$, respectively.

\subsection{Data Model \& Classical Estimators}
Consider a phased-array consisting of $M$ sensors, and let $\a_\theta$ denote the array's unit-norm response vector to a wave propagating from direction $\theta$. Consider also a far-field source transmitting a narrow-band signal from direction $\theta_1$. The signal collected by the phased-array at discrete time $k\in\{1,\dots,K\}$ is the complex vector $\mathbb{C}^M \ni \y_k = \a_1 x_k + \n_k$, 
where $\a_1\triangleq\a_{\theta_1}$ is the phased-array response to the incident signal propagating from direction $\theta_1$, the complex scalar $x_k$ is the $k$-th sample of the unknown incident signal, the vector $\n_k\in\mathbb{C}^{M}$ is a zero-mean Gaussian measurement noise with known covariance $\sigma_n^2 \I$, and $K$ is the number of time samples. Without loss of generality, suppose the noise level is $\sigma_n^2=1$. 
The \textit{sample} covariance matrix of the phased-array received signal is given by
\begin{align}
 \Rh\triangleq\frac{1}{K}\sum_{k=1}^K \y_k\y_k^H,   
 \label{Eq:CovSample}
\end{align}
and it commonly serves as the input of DoA estimation algorithms.
The \textit{population} covariance matrix, obtained by considering the expected value of the sample covariance matrix, is of great interest for theoretical analyses since it has an explicit expression, given by
\begin{align}
\Rb\triangleq\mathbb{E}\left[\Rh\right]=\sigma_1^2 \a_1\a_1^H + \I,
\label{Eq:CovPopulation_without_interference}
\end{align}
where $\sigma_1^2$ is the incident power of the source. 
The \ac{SNR} is equal to $\sigma_1^2$ since the noise level is $\sigma_n^2=1$. 
The population covariance matrix adheres to the following model of a signal propagating from direction $\theta$ with incident power $\sigma^2$,
\begin{align}
\R&=\sigma^2 \a_\theta\a_\theta^H + \I. 
\label{Eq:CovModel}
\end{align}
The dependence of $\R$ on $\theta$ and $\sigma^2$ is omitted for brevity. 

Two classical \ac{DoA} estimators are the \ac{CB}, also referred to as Bartlett or the Delay-and-Sum beamformer \cite{krim1996two}, and the \ac{MVDR} beamformer, referred to as Capon beamformer \cite{capon1969high}. 
These beamformers consist of evaluating the spatial spectra $P_\text{CB}(\theta)$ and $P_\text{MV}(\theta)$ defined in their standard form by \cite{stoica2005spectral}
\begin{align}
    P_\text{CB}(\theta) &= \a_\theta^H \:\Rh\: \a_\theta,  \label{Spectrum:CB} \\
    P_\text{MV}(\theta) &= \displaystyle \frac{1}{\a_\theta^H \Rh{}^{-1} \a_\theta}, \label{Spectrum:MV}
\end{align}
where MV stands for MVDR for conciseness.

\subsection{Direction-Finding by Covariance Fitting}
\label{Section:DirectionFinding_by_CovarianceFitting}

The \ac{CB} and \ac{MVDR} beamformer have been originally derived following statistical, spatial, and spectral considerations. A different and, arguably, more intuitive approach is \ac{CF}. Following algebraic considerations, it simply consists of matching the covariance matrix $\Rh$ to its theoretical model $\R$ given in (\ref{Eq:CovModel}).
The goodness-of-fit criterion for two matrices $\M_1$ and $\M_2$ is usually the Frobenius norm of their element-wise difference $\M_1-\M_2$, i.e., the Euclidean distance of their vectorizations,
\begin{align}
    d_\text{E}\left( \M_1,\M_2 \right)\triangleq\left\| \M_1 - \M_2 \right\|_F.
    \label{Eq:Define_Euclidean}
\end{align}
We show in Appendix \ref{Appendix:CovFit_with_noise} that the \ac{CB} and \ac{MVDR} beamformer consist of fitting $\Rh$ to $\R$, and $\Rh{}^{-1}$ to $\R{}^{-1}$, respectively, 
\begin{align}
    P_\text{CB}(\theta) &=\displaystyle \argmin_{\sigma^2} \:\:d_\text{E}(\Rh,\R),     \label{Eq:CF_CB} \\
    P_\text{MV}(\theta) &=\displaystyle \argmin_{\sigma^2} \:d_\text{E}(\Rh{}^{-1},\R{}^{-1}). \label{Eq:CF_MV}
\end{align}
In other words, the \ac{CB} in (\ref{Spectrum:CB}) and \ac{MVDR} beamformer in (\ref{Spectrum:MV}) can be recast as \ac{CF} problems in (\ref{Eq:CF_CB}) and (\ref{Eq:CF_MV}), respectively, when the fitting criterion is the Euclidean distance.


On the one hand, using the Frobenius norm of the difference as a goodness-of-fit criterion is efficient and practical. On the other hand, this criterion operates element-wise and does not take into account that the correlation matrices are not any matrices but \ac{HPD}, thereby neglecting important information. 


\subsection{Covariance Fitting based on the Affine-Invariant Metric}
\label{Section:Invoking_Riemannian_Metrics}

Here, we focus on the \ac{AI} metric \cite{pennec2006riemannian,BibList:Bhatia}. 
Formally, the distance between the \ac{HPD} covariance matrices $\Rh$ and $\R$ induced by the \ac{AI} metric is given by \cite{pennec2006riemannian,bhatia2006riemannian}
\begin{align}
d_\text{AI}^2\left( \Rh,\R \right) &= 
\left\| \log(\Rh{}^{-\sfrac{1}{2}}\R\Rh{}^{-1/2}) \right\|_F^2, \label{Eq:Define_Metric_AI} \\
&= \sum_{m=1}^M \log^2\left(\lambda_k\left\{\Rh{}^{-\sfrac{1}{2}}\R\Rh{}^{-\sfrac{1}{2}}\right\}\right), \label{Eq:AI_model_using_Eigenvalue_sum}
\end{align}
where $\lambda_m\{\cdot\}$ is the $m$-the largest eigenvalue of a matrix.

Then, we use the \ac
{AI} distance instead of the Euclidean distance in (\ref{Eq:CF_CB}) and (\ref{Eq:CF_MV}), and the spatial spectrum obtained by minimizing the \ac{AI} distance is defined by
\begin{align}
P_\text{AI}(\theta) 
&= \argmin_{\sigma^2}\!\!\! \sum_{m=1}^M \!\log^2\left( \lambda_m\left\{\Rh{}^{-\sfrac{1}{2}}\R\Rh{}^{-\sfrac{1}{2}}\right\}\right).
\label{Eq:Estimator_AI}
\end{align}
We assert that such a distance 
is likely a better goodness-of-fit criterion for \ac{CF} purposes in \ac{DoA} estimation problems than the Euclidean distance for two reasons.
First, the \ac{AI} distance exploits the \ac{HPD} structure of covariance matrices. 
Second, more specifically, minimizing each Euclidean criterion $d_\text{E}( \Rh\!,\R\!)$, $d_\text{E}( \Rh{}^{-1}\!\!\!,\R{}^{-1}\! )$, or $d_\text{E}(\W\Rh\W^H\!,\W\R\W^H\!)$, where $\W$ is an invertible matrix, leads to a different solution, thereby giving rise to the abovementioned tradeoffs between the sidelobe level, the mainlobe width, and the robustness to small samples. Conversely, the invariance of the \ac{AI} to inversion and congruence \cite{BibList:Chevallier} implies that 
$d_\text{AI}( \Rh\!,\R\!)\!=\!d_\text{AI}( \Rh{}^{-1}\!\!\!,\R{}^{-1}\! )\!=\!d_\text{AI}(\W\Rh\W^H\!,\W\R\W^H\!)$, consolidating the solutions and mitigating any tradeoff between them.
Indeed, we show empirically that using the \ac{AI} distance as a fitting criterion leads to a beamformer with spatial spectrum with a narrow mainlobe, low sidelobes, and insensitivity to small sample sizes (see Fig. \ref{Fig:1} in the sequel).

Importantly, we note that the problem in (\ref{Eq:Estimator_AI}) presents a significant advantage over previous work because no assumption is made on $\sigma^2$, and it is considered as an unknown parameter of the model covariance $\R$.  
Conversely, in \cite{BibList:Coutino,BibList:Dong,BibList:Chahrour1}, it was assumed that all the signals have unit power or identical power, or even that the incident power is estimated in a preliminary procedure \cite[RBA Algo.]{BibList:Chahrour1}. Furthermore, the problem in (\ref{Eq:Estimator_AI}) embodies another possible advantage compared with \cite{BibList:Coutino,BibList:Dong}, since it exploits all the eigenvalues instead of using merely the largest one. 
Finally, in contrast to the methods proposed in \cite{BibList:Coutino,BibList:Dong,BibList:Chahrour1}, the spatial spectrum in (\ref{Eq:Estimator_AI}) does not suffer from scaling issues (see more details in Section \ref{Section:Earlier_Riemannian_Approches}).


\section{The Log-Euclidean Beamformer}
\label{Section:LogEuclideanBeamformer}


Given the data model $\R=\sigma^2\a_\theta\a_\theta^H+\I$ in (\ref{Eq:CovModel}), our assumption is that the problem in (\ref{Eq:Estimator_AI}) has no closed-form solution. Therefore, evaluating $P_\text{AI}(\theta)$ requires an exhaustive search over a wide range of values of $\sigma^2$. Performing this search for each $\theta$ requires immense computational resources, especially because the eigenvalues have to be re-evaluated for each power hypothesis $\sigma^2$. 
As a remedy, we propose to consider the \ac{LE} metric \cite{BibList:Arsigny}, which approximates the \ac{AI} metric, and to derive a new beamformer that relies on it.
In this section, we present the derivation and analysis of the beamformer based on the \ac{LE} metric. 
The \ac{LE} distance between the \ac{HPD} covariance matrices $\Rh$ and $\R$ is defined by
\begin{align}
d_\text{LE}^2\left( \Rh,\R \right) &= 
\left\| \log(\Rh) - \log(\R) \right\|_F^2.
\label{Eq:Define_Metric_LE}
\end{align}
For simplicity, we use $\log$ both as scalar logarithm and logarithm of a matrix, depending on the context.
Similarly to the \ac{AI} metric, the \ac{LE} metric is invariant to inversion \cite{BibList:Arsigny}.  
Additionally, as a tight approximation of the \ac{AI} metric, the \ac{LE} metric also approximately admits the invariance to congruence, but not \textit{stricto sensu} \cite{BibList:Arsigny}. 
\begin{definition}[Log-Euclidean Beamformer]
\label{Definition_LE}
The \ac{LE} beamformer is defined as the minimizer of the \ac{LE} distance between the covariance matrix $\Rh$ and the model $\R$, i.e.,
\begin{align}
    P_\textnormal{LE}(\theta) &= \argmin_{\sigma^2} d_\textnormal{LE}^2\left( \Rh, \R \right).
    \label{Eq:Proposition_LE}
\end{align}
\end{definition}

\begin{proposition}[Explicit expression of the LE beamformer]
\label{Proposition_LE}
If the model is $\R=\sigma^2\a_\theta\a_\theta^H+\I$ as in (\ref{Eq:CovModel}), 
then the \ac{LE} spatial spectrum has a closed-form expression given by
\begin{align}
    P_\textnormal{LE}(\theta)=\exp\left(\a_\theta^H \log(\Rh) \a_\theta\right)-\!1.
    \label{LE_PowerSpectrum}
\end{align}
\end{proposition}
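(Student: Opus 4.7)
The plan is to exploit the rank-one structure of the model $\R = \sigma^2 \a_\theta\a_\theta^H + \I$ together with the unit-norm assumption $\|\a_\theta\|=1$ to compute $\log(\R)$ in closed form, substitute into the Frobenius-norm objective, and reduce the optimization to a scalar quadratic minimization in a monotone change of variables.

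First, I would diagonalize $\R$. Because $\a_\theta\a_\theta^H$ is a rank-one orthogonal projector ($\|\a_\theta\|=1$), the matrix $\R$ shares its eigenbasis with $\a_\theta\a_\theta^H$: the eigenvalue $1+\sigma^2$ is associated with eigenvector $\a_\theta$, and the eigenvalue $1$ is associated with the orthogonal complement $\I - \a_\theta\a_\theta^H$. Applying $\log$ eigenvalue-wise yields
\begin{equation*}
\log(\R) = \log(1+\sigma^2)\, \a_\theta\a_\theta^H,
\end{equation*}
since $\log(1)=0$ kills the orthogonal part.

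Next, introduce the monotone change of variable $\alpha \triangleq \log(1+\sigma^2)$, which maps $\sigma^2\ge 0$ bijectively onto $\alpha\ge 0$. The objective becomes
\begin{equation*}
d_\text{LE}^2(\Rh,\R) = \bigl\|\log(\Rh) - \alpha\,\a_\theta\a_\theta^H\bigr\|_F^2.
\end{equation*}
Expanding the squared Frobenius norm via $\|\X\|_F^2 = \Trace(\X\X^H)$, using the Hermiticity of $\log(\Rh)$ and $\a_\theta\a_\theta^H$, and the identities $\|\a_\theta\a_\theta^H\|_F^2 = \Trace((\a_\theta\a_\theta^H)^2) = \|\a_\theta\|^4 = 1$ and $\Trace(\log(\Rh)\,\a_\theta\a_\theta^H) = \a_\theta^H \log(\Rh)\,\a_\theta$, the objective reduces to the scalar quadratic
\begin{equation*}
g(\alpha) = \|\log(\Rh)\|_F^2 - 2\alpha\,\a_\theta^H \log(\Rh)\,\a_\theta + \alpha^2.
\end{equation*}

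Setting $g'(\alpha)=0$ gives the unconstrained minimizer $\alpha^\star = \a_\theta^H \log(\Rh)\,\a_\theta$. Inverting the change of variable returns $\sigma^2 = e^{\alpha^\star}-1$, which is exactly the claimed expression for $P_\text{LE}(\theta)$.

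The main obstacle is essentially bookkeeping rather than mathematical depth: one must justify the rank-one spectral decomposition of $\log(\R)$ carefully (invoking $\|\a_\theta\|=1$ so that $\a_\theta\a_\theta^H$ is a projector rather than merely a positive semi-definite rank-one matrix), and one should remark on the feasibility region. Strictly, the minimization is over $\sigma^2\ge 0$, i.e., $\alpha\ge 0$; however, since $g$ is a convex parabola, whenever $\a_\theta^H \log(\Rh)\,\a_\theta \ge 0$ the unconstrained solution is admissible and coincides with the constrained one, and this condition is generically satisfied near true source directions where $P_\text{LE}$ is used as a spectrum, so the closed-form expression in the proposition applies without modification.
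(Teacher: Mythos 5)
Your proof is correct and rests on the same key computation as the paper's: the rank-one spectral identity $\log(\R)=\log(\sigma^2+1)\,\a_\theta\a_\theta^H$, which follows from $\|\a_\theta\|=1$, followed by identifying the stationary point $\a_\theta^H\log(\Rh)\,\a_\theta=\log(\sigma^2+1)$. The only difference is procedural: the paper differentiates $d_\text{LE}^2$ directly with respect to $\sigma^2$ (via $\partial\log\R/\partial\sigma^2=\a_\theta\a_\theta^H/(\sigma^2+1)$) and solves the first-order condition, whereas you reparameterize by $\alpha=\log(1+\sigma^2)$ and recognize the objective as a convex scalar quadratic in $\alpha$ --- a mild but genuine improvement, since it yields global optimality for free and makes explicit the feasibility caveat $\alpha\ge 0$ (equivalently $\a_\theta^H\log(\Rh)\,\a_\theta\ge 0$), which the paper's stationary-point argument leaves unaddressed.
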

\noindent The proof appears in Appendix \ref{Appendix:LE_Spatial_Spectrum}.


In the remainder of this section, we present the spatial characteristics of the \ac{LE} beamformer and provide closed-form expressions for the fadings and sidelobes directions, the \ac{HPBW}, the peak-to-sidelobe ratio (PSLR), and the sensitivity to multipath. In addition, we demonstrate the advantages of the LE spectrum over the CB and MVDR spectra in terms of these characteristics. 
To this end, for analysis purposes, we consider the population covariance matrix ${\Rb=\sigma_1^2 \a_1\a_1^H + \I}$ defined in (\ref{Eq:CovPopulation_without_interference}) instead of the sample covariance matrix $\Rh$ defined in (\ref{Eq:CovSample}). In this case, the spatial spectrum $P_\text{LE}(\theta)$ reduces to (see Appendix \ref{Appendix:LE_Fadings_Sidelobes})
\begin{align}
    P_\text{LE}(\theta) 
    & = (\sigma_1^2+1)^{b_\theta}-1,\label{Analysis_LE_SpatialSpectrum_Final}
\end{align}
where $b_\theta$ denotes the beampattern \cite{VanTrees} or array factor \cite{BibList:Balanis}, defined as the response of the phased-array in direction $\theta$ to an incident wave propagating from an arbitrary direction $\theta_1$, 
\begin{align}b_\theta\triangleq\left|\a_\theta^H\a_1\right|^2.
    \label{Generic_b}
\end{align}
Equation (\ref{Analysis_LE_SpatialSpectrum_Final}) indicates that the spectrum $P_\text{LE}(\theta)$ depends on $\theta$ through $b_\theta$ exclusively. Similarly, one can check that this observation also applies to the spectra $P_\text{MV}(\theta)$ and $P_\text{CB}(\theta)$ evaluated using the population covariance matrix, thereby revealing that the variable of interest is actually $b_\theta$.
Therefore, we propose to characterize and compare the LE, MVDR and CB beamformers through the lens of the beampattern $b_\theta$ instead of $\theta$. We will demonstrate that considering $b_\theta$ also offers the advantage of yielding SNR-dependent characteristics (further details by the end of this section), which contribute to highlight the advantages of the LE beamformer at low SNR.

\begin{proposition}[Fadings and sidelobes of the LE beamformer]
\label{Definition_Fading}
Since the LE spectrum in (\ref{Analysis_LE_SpatialSpectrum_Final}) monotonically increases with $b_\theta$, then the fadings and the sidelobes of $P_\textnormal{LE}(\theta)$ occur at the same directions as those of the beampattern $b_\theta$. 
\end{proposition}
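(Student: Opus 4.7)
The plan is to argue directly from the closed-form expression in \eqref{Analysis_LE_SpatialSpectrum_Final}, observing that $P_\textnormal{LE}(\theta)$ is obtained from $b_\theta$ by applying a strictly monotonically increasing scalar function. Since monotone composition preserves the location and type of every critical point, the directional structure of $P_\textnormal{LE}(\theta)$ must coincide with that of $b_\theta$.

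First, I would introduce the scalar function $f(x) \triangleq (\sigma_1^2+1)^x - 1$, so that $P_\textnormal{LE}(\theta) = f(b_\theta)$. Since $\sigma_1^2 > 0$, the base satisfies $\sigma_1^2+1 > 1$, hence $f$ is strictly increasing on $\mathbb{R}$ with derivative $f'(x) = \ln(\sigma_1^2+1)\,(\sigma_1^2+1)^x > 0$. Applying the chain rule,
\begin{equation}
\frac{dP_\textnormal{LE}}{d\theta} = f'(b_\theta)\,\frac{db_\theta}{d\theta},
\end{equation}
and since $f'(b_\theta)$ is strictly positive, $\frac{dP_\textnormal{LE}}{d\theta}$ vanishes exactly when $\frac{db_\theta}{d\theta}$ vanishes, and has the same sign otherwise. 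Therefore, every local extremum of $b_\theta$ is a local extremum of $P_\textnormal{LE}(\theta)$ of the same type (minimum or maximum).

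It then remains to translate this into the language of fadings and sidelobes. The fadings of $b_\theta$ are its local minima (in fact zeros), and I would note that $f(0) = 0$, so these map to local minima (and zeros) of $P_\textnormal{LE}(\theta)$, i.e., to the fadings of the \ac{LE} spectrum. Similarly, the sidelobes of $b_\theta$ are its secondary local maxima, which by the monotonicity argument correspond one-to-one to local maxima of $P_\textnormal{LE}(\theta)$ lying below its global peak (the latter being attained at the source direction $\theta_1$, where $b_\theta$ is itself maximized since $\|\a_\theta\|=1$).

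There is no real obstacle here; the proof is essentially a remark that $P_\textnormal{LE}(\theta) = f(b_\theta)$ with $f$ strictly increasing. The only minor point worth stating carefully is that this monotonicity also preserves the classification (local max vs.\ local min), which is why both fadings \emph{and} sidelobes transfer, not just the set of critical points.
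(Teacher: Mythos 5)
Your proposal is correct and takes essentially the same route as the paper, which simply observes that $P_\textnormal{LE}(\theta)=(\sigma_1^2+1)^{b_\theta}-1$ is a monotonically increasing function of $b_\theta$ and concludes immediately; your chain-rule formalization and the remark that $f(0)=0$ merely spell out the details the paper leaves implicit. No gap.
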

As an example, for a \ac{ULA} with $M$ elements separated by $d_\lambda$ wavelengths, the beampattern takes the following form of a Dirichlet-like kernel \cite{stoica2005spectral},
\begin{align}
    b_\theta = \frac{\;\;\;\sin^2\left(M\pi d_\lambda\;(\cos{\theta}-\cos{\theta_1})\right)}
    {M^2\sin^2\left(\;\;\pi d_\lambda\;(\cos{\theta}-\cos{\theta_1})\right)}.
    \label{Beampattern_Definition}
\end{align}
The fading and sidelobe directions of the beampattern $b_\theta$ 
are described by closed-form expressions obtained by identifying its local minima and maxima \cite{stoica2005spectral}. Based on these expressions, it follows from Proposition \ref{Definition_Fading} that the fading and sidelobe directions of the LE spectrum are
\begin{align}
    \theta\approx\arccos\left(\cos{\theta_1}\pm (\ell\!+\!\zeta)\!/\!(\!M d_\lambda\!)\right)
    \quad \textrm{for } \ell=1, 2, ...,
    \label{Beampattern_Fading_and_Sidelobe}
\end{align}
where $\zeta=0$ for fadings and $\zeta=\frac{1}{2}$ for sidelobes.

\begin{table*}[!t]
	\renewcommand{\arraystretch}{2.8}
	\centering
	\caption{Spatial Properties of the Log-Euclidean, the Conventional and the MVDR beamformers}
 	\label{Table:Properties}
    \begin{tabular}{|c||c|c|c|c|}
   		\hhline{-||----}
        Beamformer &
        \renewcommand{\arraystretch}{1.5}
        \begin{tabular}{@{}cc@{}}
        Spectrum \\(\ref{Analysis_LE_SpatialSpectrum_Final}) 
        \end{tabular}& 
        \renewcommand{\arraystretch}{1.5}
        \begin{tabular}{@{}cc@{}}
        Mainlobe beamwidth \\(\ref{Appendix:LE_BW}) 
        \end{tabular}& 
        \renewcommand{\arraystretch}{1.5}
        \begin{tabular}{@{}cc@{}}
        Peak-to-sidelobe ratio \\(\ref{Appendix:LE_PSLR}) 
        \end{tabular}& 
        \renewcommand{\arraystretch}{1.5}
        \begin{tabular}{@{}cc@{}}
        Power under-estimation due to multipath \\(\ref{Analysis_LE_Sensitivity_to_multipath})  
        \end{tabular}
        \\ 
        \hhline{=::====}	
        Log-Euclidean $P_\text{LE}(\theta)$   &
        $(\sigma_1^2+1)^{b_\theta}-1$ &
        $\displaystyle \frac{1}{\left\| \Dot{\a}_1 \right\|}\sqrt{1\!-\!\frac{\log(\sigma_1^2/2+1)}{\log(\sigma_1^2+1)}}$ &
        $\displaystyle \frac{\sigma_1^2}{(\sigma_1^2\!+\!1)^{b_\text{SL}}\!-\!1}$ &
        $\displaystyle (\sigma_1^2+\sigma_2^2+1)^{\frac{1}{1+\sigma_2^2/\sigma_1^2}}-1$  \\
        \hhline{-||----}	
        Conventional \newline $P_\text{CB}(\theta)$ &
        $(b_\theta\:\sigma_1^2 +1)$ &
        $\displaystyle \frac{1}{\left\| \Dot{\a}_1 \right\|}\sqrt{\frac{1}{2} +\frac{1}{2}\frac{1}{\sigma_1^2} }$ &
        $\displaystyle \frac{ \sigma_1^2+1 }{ b_\text{SL}\:\sigma_1^2+1 }$ &
        $(\sigma_1^2+1)$  \\
        \hhline{-||----}	
        MVDR \newline $P_\text{MV}(\theta)$ &
        $\displaystyle \frac{(\sigma_1^2+1)}{ \sigma_1^2 (1-b_\theta)+1 }$ &
        $\displaystyle \frac{1}{\left\| \Dot{\a}_1 \right\|}\sqrt{\frac{1}{\sigma_1^2}}$ &
        $\displaystyle (1-b_\text{SL})\:\sigma_1^2+1$ &
        $\displaystyle \frac{(\sigma_1^2+\sigma_2^2+1)}{(\sigma_2^2+1)}$ \\
        \hhline{-||----}	
    \end{tabular}
\end{table*}

Back to the general case for further characterization of the proposed LE beamformer, denote by $\theta_1$ the mainlobe direction of an arbitrary spatial spectrum $P(\theta)$,
and by $\theta_\text{BW}$ the direction at which the mainlobe drops to half of its peak value, i.e., $\theta_\text{BW}$ satisfies $P(\theta_\text{BW})=\frac{1}{2}P(\theta_1)$.
\begin{definition}[Half-power beamwidth of a spatial spectrum]
The half-power beamwidth (HPBW) of a spatial spectrum $P(\theta)$ is defined as the angular separation 
\begin{align}
    \text{HPBW} & \triangleq |\theta_\textnormal{BW}-\theta_1|. \label{Appendix:Define_BW} 
\end{align}
\end{definition}
Existing closed-form expressions for the HPBW have been derived for classical beamformers in the \emph{absence} of noise and are therefore independent of the SNR \cite{BibList:Balanis,stoica2005spectral,VanTrees,BibList:Mailloux}. 
For example, it has been reported in \cite{VanTrees} that in the case of ULAs, the HPBW is $\frac{0.89}{(M\!-\!1)d_\lambda}$, which is indeed independent of the SNR. 
HPBW expressions reported in \cite{BibList:Balanis,stoica2005spectral,VanTrees,BibList:Mailloux} actually hold only in the \emph{absence} of noise, which is unrealistic in practice. 
In contrast, we are interested in a HPBW expression that depends on the SNR in order to better illustrate the advantages of the \ac{LE} beamformer in the \emph{presence} of noise. 
Since the definition of $\theta_\text{BW}$ relies on $P(\theta)$, which itself depends on the SNR, then $\theta_\text{BW}$ also depends on the SNR, and so does $b_\text{BW}\triangleq b_{\theta_{BW}}$, defined as the beampattern evaluated in direction $\theta_\textnormal{SL}$. 
Thus, to consider the dependence on the SNR, we recast the HPBW as a function of $b_\text{BW}$. 
\begin{proposition}[Half-power beamwidth as a function of $b_\text{BW}$]
    The first-order approximation of the HPBW defined in (\ref{Appendix:Define_BW}) 
is
\begin{align}
    \text{HPBW} &  = \frac{1}{\left\| \Dot{\a}_1 \right\|}\sqrt{1-b_\textnormal{BW}},
    \label{Appendix:Approx_BW}
\end{align}
where $\Dot{\a}_\theta$ denotes the derivative of $\a_\theta$ with respect to $\theta$. 
\end{proposition}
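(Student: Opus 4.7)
The plan is to exploit the observation already used to recast $P_\text{LE}(\theta)$ as a function of $b_\theta$: since $P(\theta)$ depends on $\theta$ only through $b_\theta$, the half-power condition $P(\theta_\text{BW})=\tfrac{1}{2}P(\theta_1)$ is equivalent to prescribing the value of $b_\theta$ at $\theta=\theta_\text{BW}$, namely $b_{\theta_\text{BW}}=b_\text{BW}$. Consequently the derivation reduces to inverting, to leading order, the function $\theta\mapsto b_\theta$ around its maximum at $\theta_1$. So I would perform a Taylor expansion of $b_\theta$ around $\theta_1$ and then solve for $|\theta_\text{BW}-\theta_1|$ in terms of $1-b_\text{BW}$.

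Concretely, first I would write $\a_\theta=\a_1+(\theta-\theta_1)\,\dot{\a}_1+\tfrac{1}{2}(\theta-\theta_1)^{2}\ddot{\a}_1+O((\theta-\theta_1)^{3})$ and compute $u(\theta)\triangleq\a_\theta^{H}\a_1$ from this expansion, using $\a_1^{H}\a_1=1$ and $b_{\theta_1}=1$. Next I would invoke the two identities that follow from differentiating the unit-norm constraint $\|\a_\theta\|^{2}=1$: at first order $\Real(\a_1^{H}\dot{\a}_1)=0$ and at second order $\Real(\a_1^{H}\ddot{\a}_1)=-\|\dot{\a}_1\|^{2}$. Substituting into $b_\theta=|u(\theta)|^{2}=u(\theta)\overline{u(\theta)}$ and keeping only the quadratic term yields
\begin{equation*}
1-b_\theta \;\approx\; (\theta-\theta_1)^{2}\,\|\dot{\a}_1\|^{2}
\end{equation*}
up to the vanishing contribution of the purely imaginary first-order term. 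Setting $\theta=\theta_\text{BW}$, taking the square root, and rearranging produces $|\theta_\text{BW}-\theta_1|\approx\sqrt{1-b_\text{BW}}/\|\dot{\a}_1\|$, which is the announced expression after invoking the definition \eqref{Appendix:Define_BW}.

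The main obstacle I anticipate is in the bookkeeping of the Taylor expansion: a naive computation produces the extra term $-(\theta-\theta_1)^{2}|\dot{\a}_1^{H}\a_1|^{2}$ in the second-order coefficient, which must be shown to be absorbed. The clean way to handle this is to decompose $\dot{\a}_1$ into its component parallel to $\a_1$ (which is purely imaginary by the first-order constraint, and whose contribution to $|\a_\theta^{H}\a_1|^{2}$ is higher order) and its orthogonal component (which drives the beamwidth); equivalently, one can write $1-b_\theta=\|\a_\theta-(\a_\theta^{H}\a_1)\a_1\|^{2}$ and expand only the $\a_1$-orthogonal part. Either route confirms that only $\|\dot{\a}_1\|^{2}$ survives in the leading-order approximation, and the rest of the calculation is algebraic.
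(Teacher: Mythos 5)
Your route is essentially the paper's: a second-order Taylor expansion of $b_\theta=|\a_1^H\a_\theta|^2$ about $\theta_1$, with $b_1=1$ and $\Dot{b}_1=0$ coming from the unit-norm constraint, and the curvature term giving $1-b_\theta\approx(\theta-\theta_1)^2\|\Dot{\a}_1\|^2$, hence $|\theta_\text{BW}-\theta_1|\approx\sqrt{1-b_\text{BW}}/\|\Dot{\a}_1\|$. The one step I would not accept as written is your handling of the cross term. Carrying the expansion through, the coefficient of $(\theta-\theta_1)^2$ in $b_\theta$ is $\Real\{\a_1^H\Ddot{\a}_1\}+|\a_1^H\Dot{\a}_1|^2=-\|\Dot{\a}_1\|^2+|\a_1^H\Dot{\a}_1|^2$, so in general $1-b_\theta\approx(\theta-\theta_1)^2\left(\|\Dot{\a}_1\|^2-|\a_1^H\Dot{\a}_1|^2\right)$. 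Your alternative route via $1-b_\theta=\|\a_\theta-(\a_\theta^H\a_1)\a_1\|^2$ gives exactly the same quantity, namely the squared norm of the component of $\Dot{\a}_1$ orthogonal to $\a_1$. Your claim that the purely imaginary parallel component contributes only at higher order is therefore incorrect: it contributes at second order and shifts the curvature from $\|\Dot{\a}_1\|^2$ to $\|\Dot{\a}_1\|^2-|\a_1^H\Dot{\a}_1|^2$, which would change the denominator in (\ref{Appendix:Approx_BW}). To land on the stated formula you need the full identity $\Dot{\a}_1^H\a_1=0$, not merely $\Real\{\a_1^H\Dot{\a}_1\}=0$; this is precisely the hypothesis the paper asserts for a ``properly normalized'' steering vector (e.g., phase reference at the array centroid). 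With that hypothesis made explicit, your argument closes and coincides with the paper's proof.
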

The proof appears in Appendix \ref{Appendix:LE_Beamwidth_Mainlobe}. We now specialize (\ref{Appendix:Approx_BW}) to the case of the LE beamformer presented in this paper.
\begin{proposition}[Half-power beamwidth of the LE beamformer]
    In the case of the LE beamformer in (\ref{Analysis_LE_SpatialSpectrum_Final}), the HPBW defined in (\ref{Appendix:Approx_BW}) has the following closed-form expression,
    \begin{align}
    \text{HPBW}_\textnormal{LE} & = 
    \frac{1}
    {\left\| \Dot{\a}_1 \right\|}\sqrt{1\!-\!\frac{\log(\sigma_1^2/2\!+\!1)}{\log(\sigma_1^2\!+\!1)}}.
    \label{Appendix:LE_BW}
\end{align}
\end{proposition}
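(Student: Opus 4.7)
The plan is to specialize the general HPBW formula (\ref{Appendix:Approx_BW}) to the LE spatial spectrum (\ref{Analysis_LE_SpatialSpectrum_Final}) by computing the value $b_\textnormal{BW}$ at which the LE mainlobe drops to half of its peak, and then substituting that value into the square root.

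First, I would evaluate the LE spectrum at the mainlobe direction $\theta_1$. Since $\a_\theta$ is unit-norm, the beampattern in (\ref{Generic_b}) satisfies $b_{\theta_1}=|\a_1^H\a_1|^2=1$, and (\ref{Analysis_LE_SpatialSpectrum_Final}) gives $P_\textnormal{LE}(\theta_1)=(\sigma_1^2+1)-1=\sigma_1^2$. Then, by the definition preceding (\ref{Appendix:Define_BW}), the direction $\theta_\textnormal{BW}$ satisfies $P_\textnormal{LE}(\theta_\textnormal{BW})=\tfrac{1}{2}\sigma_1^2$, which by (\ref{Analysis_LE_SpatialSpectrum_Final}) is equivalent to
\begin{equation*}
(\sigma_1^2+1)^{b_\textnormal{BW}}-1=\tfrac{1}{2}\sigma_1^2.
\end{equation*}

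Next, I would solve for $b_\textnormal{BW}$ by taking the logarithm of both sides after rearranging, obtaining
\begin{equation*}
b_\textnormal{BW}=\frac{\log(\sigma_1^2/2+1)}{\log(\sigma_1^2+1)}.
\end{equation*}
Finally, I would insert this expression into the generic formula (\ref{Appendix:Approx_BW}), which directly yields the stated closed-form in (\ref{Appendix:LE_BW}).

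There is essentially no obstacle: the proof is a direct substitution relying on the monotonicity of $P_\textnormal{LE}$ in $b_\theta$ (already used in Proposition \ref{Definition_Fading}), which guarantees that the half-power condition has a unique solution for $b_\textnormal{BW}\in(0,1)$. The only subtlety worth noting is that the first-order approximation underlying (\ref{Appendix:Approx_BW}) is inherited as-is, so the derived HPBW expression is valid to the same order.
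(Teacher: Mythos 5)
Your proof is correct and follows essentially the same route as the paper's: evaluate $P_\textnormal{LE}(\theta_1)=\sigma_1^2$, impose the half-power condition $(\sigma_1^2+1)^{b_\textnormal{BW}}-1=\sigma_1^2/2$, solve for $b_\textnormal{BW}=\log(\sigma_1^2/2+1)/\log(\sigma_1^2+1)$, and substitute into (\ref{Appendix:Approx_BW}). Your added remarks on monotonicity (uniqueness of $b_\textnormal{BW}$) and on inheriting the first-order approximation are accurate and slightly more careful than the paper's own wording.
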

\begin{proof}
Equation (\ref{Analysis_LE_SpatialSpectrum_Final}) yields that $P_\text{LE}(\theta_\text{BW})=(\sigma_1^2+1)^{b_\text{BW}}-1$. Furthermore, we have $P_\text{LE}(\theta_\text{BW})=P_\text{LE}(\theta_1)/2$ by definition. By combining these two expressions for $P_\text{LE}(\theta_\text{BW})$ and recalling that $P_\text{LE}(\theta_1)=\sigma_1^2$, we get after a few manipulations that 
$b_\text{BW} = \frac{\log(\sigma_1^2/2+1)}{\log(\sigma_1^2+1)}$. 
Substituting this result in (\ref{Appendix:Approx_BW}) yields the HPBW expression in (\ref{Appendix:LE_BW}).
\end{proof}
The expression for $b_\text{BW}$ obtained in the above proof depends exclusively on $\sigma_1^2$, thereby on the SNR. 
In turn, the HPBW of the LE beamformer, established in (\ref{Appendix:LE_BW}), also depends on the SNR. 
Note that as the SNR increases, the HPBW decreases.

\begin{definition}[Peak-to-sidelobe ratio of a spatial spectrum]
The peak-to-sidelobe ratio (PSLR) of a spatial spectrum $P(\theta)$ is defined as the following power spectrum ratio
\begin{align}
    \text{PSLR} &\triangleq\frac{P(\theta_1)}{P(\theta_\textnormal{SL})}, \label{Appendix:Approx_PSLR}
\end{align}
where $\theta_\textnormal{SL}$ denotes the direction of the closest sidelobe.
\end{definition}
The PSLR in (\ref{Appendix:Approx_PSLR}) is usually evaluated for classical beamformers in the \emph{absence} of noise \cite{BibList:Balanis,stoica2005spectral,VanTrees,BibList:Mailloux}. For example, the reported PSLR in the case of ULAs is about $13.26$ [dB], which is clearly independent on the SNR. Conversely, in order to exhibit the advantages of the LE beamformer, we propose to evaluate the PSLR in (\ref{Appendix:Approx_PSLR}) in the \emph{presence} of noise, by using a population covariance matrix that accounts for noise.
\begin{proposition}[Peak-to-sidelobe ratio of the LE beamformer]
    In the case of the LE beamformer in (\ref{Analysis_LE_SpatialSpectrum_Final}), the PSLR defined in (\ref{Appendix:Approx_PSLR}) has the following closed-form expression,
    \begin{align}
    \text{PSLR}_\textnormal{LE}&\triangleq\frac{P_\textnormal{LE}(\theta_1)}{P_\textnormal{LE}(\theta_\textnormal{SL})}
   =
   \frac
   {\sigma_1^2}
   {(\sigma_1^2+1)^{b_\textnormal{SL}}-1},
       \label{Appendix:LE_PSLR}
    \end{align}
    where $b_\textnormal{SL}$ is the beampattern evaluated in direction $\theta_\textnormal{SL}$.
\end{proposition}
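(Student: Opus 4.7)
The plan is to obtain the claimed PSLR expression by a direct substitution into the closed-form spatial spectrum of the LE beamformer given in (\ref{Analysis_LE_SpatialSpectrum_Final}), which was derived under the population covariance matrix model. This makes the result essentially a corollary of earlier developments rather than a new technical statement.

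First, I would evaluate the numerator $P_\textnormal{LE}(\theta_1)$. Since $\a_\theta$ is unit-norm for every $\theta$, the beampattern $b_\theta=|\a_\theta^H\a_1|^2$ in (\ref{Generic_b}) satisfies $b_{\theta_1}=|\a_1^H\a_1|^2=1$. Plugging this into (\ref{Analysis_LE_SpatialSpectrum_Final}) immediately gives
\begin{equation*}
P_\textnormal{LE}(\theta_1)=(\sigma_1^2+1)^{1}-1=\sigma_1^2.
\end{equation*}
This value was also used implicitly in the proof of the HPBW expression, so it can simply be invoked.

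Second, I would evaluate the denominator $P_\textnormal{LE}(\theta_\textnormal{SL})$. By the definition of $b_\textnormal{SL}$ as the beampattern evaluated in the sidelobe direction $\theta_\textnormal{SL}$, substitution into (\ref{Analysis_LE_SpatialSpectrum_Final}) yields
\begin{equation*}
P_\textnormal{LE}(\theta_\textnormal{SL})=(\sigma_1^2+1)^{b_\textnormal{SL}}-1.
\end{equation*}
Taking the ratio of the two expressions, as prescribed by Definition (\ref{Appendix:Approx_PSLR}), produces exactly (\ref{Appendix:LE_PSLR}).

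There is no genuine obstacle in this proof; the only subtlety worth flagging is the legitimacy of using the population-covariance form (\ref{Analysis_LE_SpatialSpectrum_Final}) rather than the sample-covariance form (\ref{LE_PowerSpectrum}). This was already justified in the discussion preceding Proposition \ref{Definition_Fading}, where it was explained that closed-form characterizations of mainlobe/sidelobe features are carried out with $\Rb$ so that the resulting quantities become functions of the beampattern $b_\theta$ and the SNR $\sigma_1^2$ alone. Once that is acknowledged, the proof reduces to the two-line substitution above.
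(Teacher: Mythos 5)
Your proof is correct and follows essentially the same route as the paper's: evaluate $P_\textnormal{LE}(\theta_1)=\sigma_1^2$ via $b_{\theta_1}=1$, evaluate $P_\textnormal{LE}(\theta_\textnormal{SL})=(\sigma_1^2+1)^{b_\textnormal{SL}}-1$ by direct substitution, and take the ratio. The remark on the legitimacy of using the population-covariance form matches the paper's own framing preceding Proposition \ref{Definition_Fading}.
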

\begin{proof}
According to (\ref{Analysis_LE_SpatialSpectrum_Final}), we have $P_\text{LE}(\theta_1)=(\sigma_1^2+1)^{b_1}-1$ where $b_1$ denotes the beampattern $b_\theta$ evaluated in direction $\theta_1$. Eq. (\ref{Generic_b}) yields that $b_1=1$, thus $P_\text{LE}(\theta_1)=\sigma_1^2$. Furthermore, it also follows from (\ref{Analysis_LE_SpatialSpectrum_Final}) that $ P_\text{LE}(\theta_\text{SL})=  (\sigma_1^2+1)^{b_\text{SL}}-1$. Substituting these expressions in (\ref{Appendix:Approx_PSLR}) directly yields (\ref{Appendix:LE_PSLR}).
\end{proof}

In order to compare the LE beamformer with the CB and MVDR beamformer, we also evaluate the spectra $P_\text{CB}(\theta)$ and $P_\text{MV}(\theta)$ using the population covariance matrix. 
Then, using (\ref{Appendix:Approx_BW}) and (\ref{Appendix:Approx_PSLR}), we obtain HPBW and PSLR expressions for the \ac{CB} and \ac{MVDR} beamformer. 
The resulting expressions for the power spectra, the HPBW and the PSLR are reported in Table \ref{Table:Properties}. As observed in the case of the LE beamformer, these expressions for the \ac{CB} and \ac{MVDR} beamformer all depend on $\theta$ through $b_\theta$.

\begin{figure*}
\centering
\subfloat[Half-power beamwidth (HPBW).\label{Fig:8_HPBW}]{ \includegraphics[width=0.49\linewidth,trim=4.6cm 10.5cm 5.3cm 10.9cm,clip=true]{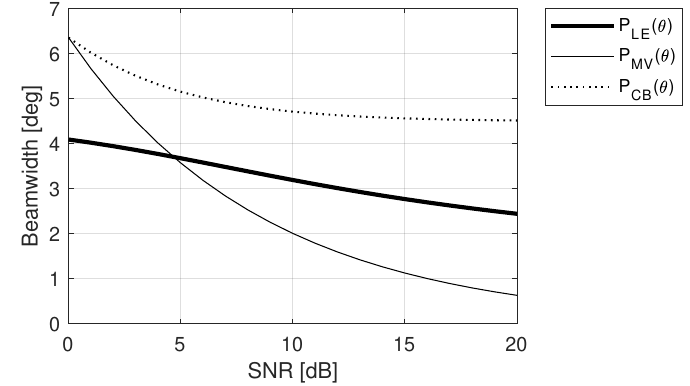}} 
\hfill
\subfloat[Peak-to-sidelobe ratio (PSLR).\label{Fig:8_PSLR}]{ \includegraphics[width=0.49\linewidth,trim=4.6cm 10.5cm 5.3cm 10.9cm,clip=true]{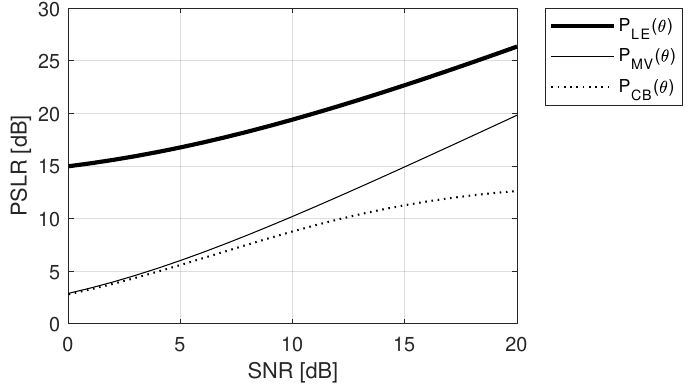}}
\caption{Half-power beamwidth (HPBW) and peak-to-sidelobe ratio (PSLR) evaluated using the population covariance matrix, for the Log-Euclidean beamformer $P_\text{LE}(\theta)$, the MVDR beamformer $P_\text{MV}(\theta)$ and the Conventional beamformer $P_\text{CB}(\theta)$. Case of a ULA with $M=10$ antennas and a single source at boresight.}
\label{Fig:8}
\end{figure*}



The HPBW of the LE, CB and MVDR spectra are presented in Fig. \ref{Fig:8_HPBW} in the case of a ULA with $M=10$ elements and a source in the boresight direction $\theta_1=90$ [deg] for SNR values ranging from 0 to 20 [dB]. The HPBW of the LE spectrum is plotted in Fig. \ref{Fig:8_HPBW} (thick line) together with those of the MVDR spectrum (thin line) and the CB spectrum (dotted line). 
Fig. \ref{Fig:8_HPBW} indicates that for SNR lower than 5 [dB], the thinnest mainlobe is that of the LE spectrum. 
The PSLR of the LE, CB and MVDR spectra are presented in Fig. \ref{Fig:8_PSLR}. 
It appears that the best PSLR is clearly that of the LE spectrum (thick line), towards which the MVDR spectrum (thin line) asymptotically tends to, while that of the CB spectrum (dotted line) converges to the $13.26$ [dB] level reported in \cite{BibList:Balanis,stoica2005spectral,VanTrees,BibList:Mailloux}. 
Fig. \ref{Fig:8} demonstrates the advantages of the LE beamformer over its Euclidean counterparts, namely the CB and the MVDR beamformer, especially at low SNR, in terms of PSLR and HPBW.

Consider now a two-ray multipath channel. Besides the \ac{SoI} in direction $\theta_1$ with incident power $\sigma_1^2$, a fully coherent signal in direction $\theta_2$ with incident power $\sigma_2^2$ impinges on the phased array, thereby corrupting the DoA estimates of the \ac{SoI}.
Denote by $\rho$ the complex scalar with unit magnitude that models the phase difference between the two coherent rays. Then, the population covariance matrix is 
\begin{align}
    \Rb_2 = \left( \sigma_1\a_1 + \rho\sigma_2\a_2 \right)
    \left( \sigma_1\a_1 + \rho\sigma_2\a_2 \right)^H
    +\I.
    \label{Eq:Def_Rh2}
\end{align}
Denote by $P_\text{LE}(\theta|\Rb_2)$, $P_\text{MV}(\theta|\Rb_2)$ and $P_\text{CB}(\theta|\Rb_2)$  the LE, MVDR and CB spectra evaluated in the presence of multipath using $\Rb_2$, instead of $\Rb$ that ignores multipath.
\begin{proposition}[SoI power estimates in multipath conditions]
\label{Proposition_Multipath}
Suppose that $\theta_1$ and $\theta_2$ are sufficiently separated such that $\theta_2$ (the direction of the multipath ray) is away from the mainlobe of the \ac{SoI} propagating from direction $\theta_1$. Then, the power estimates in the \ac{SoI} direction, obtained from the LE, MVDR, and CB spectra, are given by
\begin{align}
    P_\textnormal{LE}(\theta_1|\Rb_2)&\approx(\sigma_1^2+1+\sigma_2^2)^{\frac{\sigma_1^2}{\sigma_1^2+\sigma_2^2}} - 1,
    \label{Analysis_LE_Sensitivity_to_multipath}
    \\
    P_\textnormal{MV}(\theta_1|\Rb_2)&\approx(\sigma_1^2+1+\sigma_2^2)/(\sigma_2^2+1),
    \label{Analysis_MV_Sensitivity_to_multipath}
    \\
    P_\textnormal{CB}(\theta_1|\Rb_2)&\approx(\sigma_1^2+1).
    \label{Analysis_CB_Sensitivity_to_multipath}
\end{align}
\end{proposition}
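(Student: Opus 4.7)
The plan is to exploit the fact that the signal part of $\Rb_2$ is rank one, so that $\Rb_2 = \b\b^H + \I$ with $\b \triangleq \sigma_1\a_1 + \rho\sigma_2\a_2$. The separation hypothesis (that $\theta_2$ lies outside the mainlobe at $\theta_1$) is used exclusively through the approximation $\a_1^H\a_2 \approx 0$, which together with $\|\a_1\|=\|\a_2\|=1$ and $|\rho|=1$ gives the two scalars that drive the whole calculation: $\a_1^H\b \approx \sigma_1$ and $\|\b\|^2 = \b^H\b \approx \sigma_1^2+\sigma_2^2$. Establishing these two approximations at the outset will be the first step.

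For the CB case, I would simply expand $P_\textnormal{CB}(\theta_1|\Rb_2) = \a_1^H\Rb_2\a_1 = |\a_1^H\b|^2 + 1$, and substitute $|\a_1^H\b|^2\approx\sigma_1^2$ to obtain $(\sigma_1^2+1)$. For the MVDR case, I would invert $\Rb_2$ by the Sherman--Morrison identity, $\Rb_2^{-1} = \I - \b\b^H/(1+\|\b\|^2)$, compute
\begin{equation*}
\a_1^H\Rb_2^{-1}\a_1 = 1 - \frac{|\a_1^H\b|^2}{1+\|\b\|^2} \approx \frac{\sigma_2^2+1}{\sigma_1^2+\sigma_2^2+1},
\end{equation*}
and then invert to obtain $(\sigma_1^2+\sigma_2^2+1)/(\sigma_2^2+1)$.

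The LE case is the real content. The idea is that because $\b\b^H$ is rank one, $\Rb_2$ has exactly one eigenvalue different from unity, namely $1+\|\b\|^2$, with eigenvector $\b/\|\b\|$; all other eigenvalues are $1$. Applying the scalar logarithm eigenvalue-wise yields the closed-form matrix logarithm
\begin{equation*}
\log(\Rb_2) = \log\!\bigl(1+\|\b\|^2\bigr)\,\frac{\b\b^H}{\|\b\|^2}.
\end{equation*}
Plugging this into Proposition~\ref{Proposition_LE},
\begin{equation*}
\a_1^H\log(\Rb_2)\a_1 = \log\!\bigl(1+\|\b\|^2\bigr)\,\frac{|\a_1^H\b|^2}{\|\b\|^2} \approx \log(\sigma_1^2+\sigma_2^2+1)\,\frac{\sigma_1^2}{\sigma_1^2+\sigma_2^2},
\end{equation*}
and exponentiating gives exactly \eqref{Analysis_LE_Sensitivity_to_multipath}.

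The main obstacle, and the only nontrivial point, is justifying the closed-form matrix logarithm: one must be careful that $\b\b^H/\|\b\|^2$ is a genuine orthogonal projector onto the one-dimensional signal subspace so that the spectral calculus applies exactly, and that the approximation $\a_1^H\a_2\approx 0$ is used only after the logarithm has been reduced to scalar quantities, so that small cross-term errors do not get amplified by the logarithm at high SNR. Once this is handled cleanly, the three estimates follow from straightforward substitutions.
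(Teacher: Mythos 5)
Your proposal is correct and follows essentially the same route as the paper's Appendix C: the rank-one structure $\Rb_2=\b\b^H+\I$ with $\b=\sigma_1\a_1+\rho\sigma_2\a_2$, the spectral form $\log\Rb_2=\log(1+\|\b\|^2)\,\b\b^H/\|\b\|^2$ for the LE case, Sherman--Morrison (the paper calls it the Searle identity) for MVDR, and direct expansion for CB, with the separation hypothesis entering only through $\a_1^H\a_2\approx 0$. Your remark about deferring that approximation until after the matrix logarithm has been reduced to scalars is a slightly cleaner ordering than the paper's, which already approximates the eigenvalue at the decomposition stage, but the substance is identical.
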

The proof and details on the approximation appear in Appendix \ref{Appendix:PowerUnderEstimation}. It is noteworthy that the expressions in (\ref{Analysis_LE_Sensitivity_to_multipath}), (\ref{Analysis_MV_Sensitivity_to_multipath}) and (\ref{Analysis_CB_Sensitivity_to_multipath}) are independent of $\theta_1$. 

Ideally, the power estimates in the \ac{SoI} direction should not be affected by multipath, i.e., it should be as close as possible to $\sigma_1^2$ for the LE beamformer and to $\sigma_1^2+1$ for the CB and MVDR beamformer. 
Eq. (\ref{Analysis_CB_Sensitivity_to_multipath}) indicates that this 
ideal robustness is achieved by the \ac{CB}. To illustrate the robustness of the LE beamformer to multipath, the expressions (\ref{Analysis_LE_Sensitivity_to_multipath}), (\ref{Analysis_MV_Sensitivity_to_multipath}) and (\ref{Analysis_CB_Sensitivity_to_multipath}) are numerically evaluated and presented in Fig. \ref{Fig:10} for a grid of values of $\sigma_1^2$ and $\sigma_2^2$ spanning from 0 to 10 [dB].
\begin{figure}
\centering
\includegraphics[width=\linewidth,trim=2cm 0.7cm 1.5cm 2.2cm,clip=true]{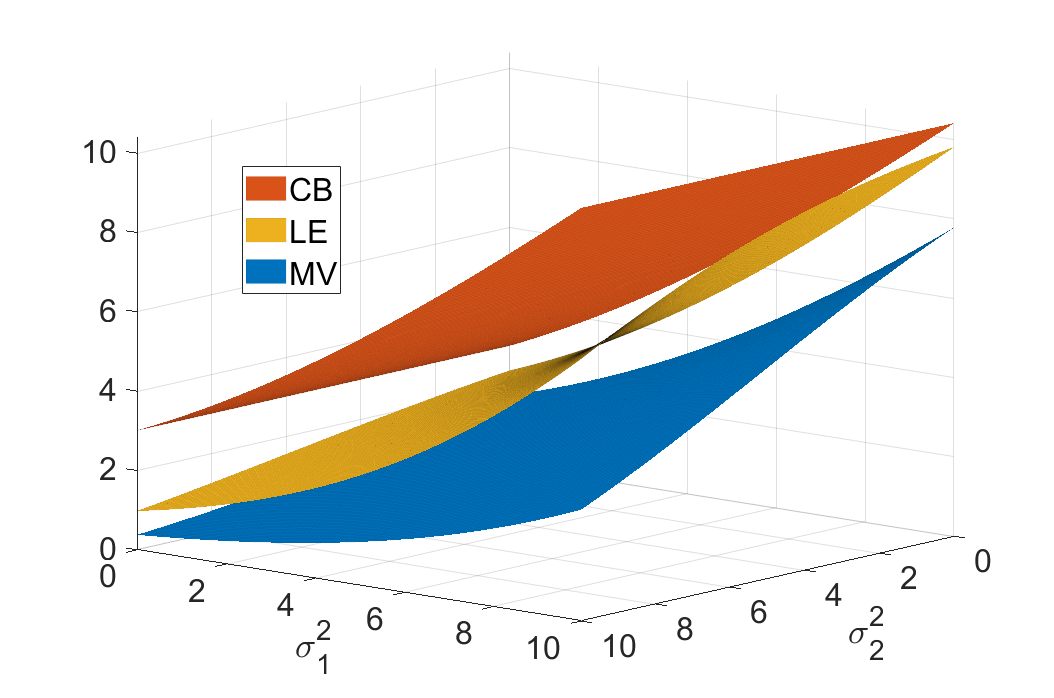}
\caption{The approximate power spectra $P_\text{CB}(\theta_1|\Rb_2)$, $P_\text{LE}(\theta_1|\Rb_2)$, and $P_\text{MV}(\theta_1|\Rb_2)$ given in (\ref{Analysis_LE_Sensitivity_to_multipath}), (\ref{Analysis_MV_Sensitivity_to_multipath}) and (\ref{Analysis_CB_Sensitivity_to_multipath}), as a function of $\sigma_1^2$ and $\sigma_2^2$.}
\label{Fig:10}
\end{figure}
We see in Fig. \ref{Fig:10} that in the presence of multipath, the LE spectrum slightly under-estimates the \ac{SoI} power, outperforming the MVDR spectrum, which significantly under-estimates it. 

\section{Alternative Metrics for Covariance Fitting in DoA Estimation Problems}
\label{Section:Alternative_Riemannian_Metrics}


To complement the analysis and to enhance the difference between the proposed LE beamformer (\ref{LE_PowerSpectrum}), which is a low-complexity approximation of the AI beamformer (\ref{Eq:Estimator_AI}), and the \ac{CB} and MVDR beamformer, we show in this section
that other metrics systematically yield the \ac{CB} and MVDR beamformer, or variants and combinations of which. The considered metrics and their respective spatial spectra are summarized in Table \ref{Table:Comparison}.

\begin{table*}
	\renewcommand{\arraystretch}{1.8}
	\centering
    \caption{Beamformers and DoA-Estimators Obtained by Covariance Fitting for Different Metrics}
 	\label{Table:Comparison}
 	\begin{tabular}{|cc||l|l|l|}
 		\hhline{--||---}
        \multicolumn{2}{|c||}{BEAMFORMER} &
        \multicolumn{1}{c|}{DISTANCE MINIMIZATION} &
        SPATIAL SPECTRUM &
        SHRINKAGE \\
        \hhline{==::===}	
        Conventional &
        $ P_\text{CB}(\theta)$ &
        $ \displaystyle \min_{\sigma^2} d_\text{E}^2( \:\Rh\:,\:\R\: )\:=\:\left\| \Rh-\R \right\|_F^2 $ &
         $\a_\theta^H \Rh \a_\theta$ &
        $g(\lambda)=\lambda$
        \\
        \hhline{-----}
        MVDR (Capon) &
        $ P_\text{MV}(\theta)$ &
        $ \displaystyle \min_{\sigma^2} d_\text{E}^2( \Rh\!{}^{-1}\!\!\!,\R\!{}^{-1}\! )=\left\| \Rh{}^{-1}\!\!\!-\!\R{}^{-1}\! \right\|_F^2 $ &
         $\displaystyle \frac{1}{\a_\theta^H \Rh{}^{-1} \a_\theta}$ &
        $g(\lambda)=-\dfrac{1}{\lambda}$
        \\
        \hhline{-----}
        Log-Euclidean &
        $ P_\text{LE}(\theta)$ &
        $ \displaystyle \min_{\sigma^2} d_\text{LE}^2( \:\Rh\:,\:\R\: )\:=\:
        \left\| \log(\Rh) - \log(\R) \right\|_F^2$ &
        $\!\exp\left(\!\a_\theta^H \!\log(\Rh)\; \a_\theta\!\right)\!-\!\sigma_n^2\!$ &
        $g(\lambda)=\log(\lambda)$
        \\
        \hhline{-----}  
        Affine-Invariant &
        $ P_\text{AI}(\theta)$ &
        $ \displaystyle \min_{\sigma^2} d_\text{AI}^2( \:\Rh\:,\:\R\: )\:=\:
        \left\| \log(\Rh{}^{-\sfrac{1}{2}}\R\Rh{}^{-1/2}) \right\|_F^2
        $ &
        \hspace{1cm} NA &
        \hspace{0.8cm} NA 
        \\
        \hhline{-----}
        Kullback-Leibler (1) &
        $ P_\text{KL}^{(1)}(\theta)$ &
        $ \displaystyle \min_{\sigma^2} d_\text{KL}^2( \:\Rh\:,\:\R\: )=\Trace(\R{}^{-1}\Rh-\I)\!-\!\log\Det(\R{}^{-1}\Rh) $ &
         $\displaystyle \a_\theta^H \Rh \a_\theta - \sigma_n^2$ &
         $g(\lambda)=\lambda$
        \\
        \hhline{-----}
        Kullback-Leibler (2) &
        $ P_\text{KL}^{(2)}(\theta)$ &
        $ \displaystyle \min_{\sigma^2} d_\text{KL}^2( \:\R\:,\:\Rh\: )=\Trace(\Rh{}^{-1}\R-\I)\!-\!\log\Det(\Rh{}^{-1}\R) $ &
         $\displaystyle \frac{1}{\a_\theta^H \Rh{}^{-1} \a_\theta} - \sigma_n^2$ &
         $g(\lambda)=-\dfrac{1}{\lambda}$
        \\
        \hhline{-----}
        Log-Determinant &
        $ P_\text{LD}(\theta)$ &
        $ \displaystyle \min_{\sigma^2} d_\text{LD}^2( \Rh,\R )=
        \log\!\Big(\!\!\Det\Big\{\!\!\frac{\Rh+\R}{2}\!\!\Big\}\!\Big) \!\!-\!\! 
        \frac{1}{2}\log\!\Big(\!\!\Det\big\{\!\Rh\R\!\big\}\!\Big)$ &
        $ \!\!\displaystyle \frac{1}{\a_\theta^H\!(\Rh\!+\!\sigma_n^2\I)^{-1}\a_\theta} \!-\! 2\sigma_n^2\!\!$ &
        $g(\lambda)=-\dfrac{1}{\lambda+\sigma_n^2}$
        \\
        \hhline{--||---}        
    \end{tabular}
\end{table*}

\subsection{Covariance Fitting based on the Kullback-Leibler Distance}
In the Information Geometry framework, the covariance matrices $\R$ and $\Rh$ fully describe the zero-mean Gaussian distribution of the model and the samples. The Kullback-Leibler (KL) divergence is a well-known measure for the distance between two such distributions, its expression is \cite{BibList:Chevallier}
\begin{align}
    d_\text{KL}(\Rh,\R)\triangleq\Trace(\R{}^{-1}\Rh-\I)-\log\Det(\R{}^{-1}\Rh).
    \label{Define_KullbackLeibler}    
\end{align}
The KL divergence is not invariant to inversion and congruence.  
The spatial spectrum obtained by minimizing the KL distance is (see Supplementary Material, Appendix \ref{Appendix:KL})
\begin{align}
     P_\text{KL}^{(1)}(\theta)\triangleq{\displaystyle \argmin_{\sigma^2}}\: d_\text{KL}^2( \Rh,\R )=
    \a_\theta^H \:\Rh\: \:\a_\theta \: - 1,    \label{KL_PowerSpecvrtum_1}
\end{align}
which is similar to the spectrum of the CB beamformer (\ref{Spectrum:CB}) up to the inconsequential offset $-1$.
Since the KL divergence is not symmetric, we get that the spatial spectrum obtained by minimizing 
$d_\text{KL}(\R,\Rh)$ instead of $d_\text{KL}(\Rh,\R)$ is (see Supplementary Material, Appendix \ref{Appendix:KL})
\begin{align}
     P_\text{KL}^{(2)}(\theta)\triangleq{\displaystyle \argmin_{\sigma^2}}\: d_\text{KL}^2( \R,\Rh )=
    {\displaystyle \frac{1}{\a_\theta^H \Rh\!{}^{-1}\! \a_\theta}} - 1,
    \label{KL_PowerSpecvrtum_2}
\end{align}
which coincides with the spectrum of the MVDR beamformer (\ref{Spectrum:MV}) up to a constant offset $-1$. 

\subsection{Covariance Fitting based on the Log-Determinant Metric}
The Log-Determinant (LD) distance between the \ac{HPD} covariance matrices $\Rh$ and $\R$ is defined by \cite{sra2012new}
\begin{align}
    d_\text{LD}^2\left( \!\Rh,\R \!\right)\!\triangleq\!
    \log\!\left(\!\Det\left\{\!\!\frac{\Rh+\R}{2}\!\!\right\}\!\right) \!-\! 
    \frac{\log\!\left(\!\Det\left\{\!\Rh\R\!\right\}\!\right)}{2}.
    \label{LogDet_Distance}
\end{align}
The spatial spectrum obtained by minimizing the LD distance defined in (\ref{LogDet_Distance}) is given by (proof in Supp. Material, App. \ref{Appendix:LD})
\begin{align}
    P_\text{LD}(\theta)=\frac{1}{\a_\theta^H(\Rh+\I)^{-1}\a_\theta} - 2,
    \label{LDPowerSpectrum}
\end{align}
implying that the LD beamformer is a variant of the MVDR beamformer with unit diagonal loading and correction of the power offset.
Furthermore, the LD distance is invariant to matrix inversion and congruence (see Supplementary Material, Appendix \ref{Appendix:LD}). 
As mentioned in \cite[p. 9]{BibList:Horev}, although it approximates the AI metric, the log-Determinant metric is not a Riemannian metric \textit{stricto sensu}.

\subsection{Relation to Shrinkage}

When the sample size is small, a common technique for covariance matrix estimation is to manipulate (``shrink'') the eigenvalues of the sample covariance matrix (e.g., see \cite{stein1986lectures}).
Consider the decomposition of the covariance matrix $\Rh$ into eigenvalues $\lambda_m$ and unit-norm eigenvectors $\u_m$ such that ${\Rh = \sum_{m=1}^M \lambda_m \u_m \u_m^H }$. Then, the spectrum $P_\text{CB}(\theta)$ in (\ref{Spectrum:CB}), the spectrum $P_\text{MV}(\theta)$ in (\ref{Spectrum:MV}), and the spectrum $P_\text{LE}(\theta)$ in (\ref{LE_PowerSpectrum})
can be recast as follows
\begin{align}
    P_\text{CB}(\theta) &= \sum_{m} |\a_\theta^H\u_m|^2 \:\:\lambda_m,  \nonumber \\
    P_\text{MV}(\theta) &= \Big(\sum_{m} |\a_\theta^H\u_m|^2 \:\:\lambda_m^{-1}\:\:\Big)^{-1}, \nonumber \\
    P_\text{LE}(\theta) &= \exp\Big(\sum_{m} |\a_\theta^H\u_m|^2\log(\lambda_m)\Big)-1.
\end{align}
Neglecting constant scalar offsets, these three spectra adhere to a generic spectrum model, given by
\begin{align}
    \hspace{-1.0cm}P(\theta)\triangleq g^{-1}\Big(  \:\sum_{m} |\a_\theta^H\u_m|^2 \: g\big(\lambda_m\big) \: \Big),
    \label{Eq:Proposition_generic_spectrum}
\end{align}
where $g:\mathbb{R}^+\rightarrow\mathbb{R}^+$ is a monotonically increasing function that operates on the eigenvalues of the covariance matrix. The functions $g(\cdot)$ corresponding to the considered beamformers are presented in the rightmost column of Table \ref{Table:Comparison}\footnote{We also claim that the celebrated MUSIC and MODE beamformers are closely related to the generic model (\ref{Eq:Proposition_generic_spectrum}), but it is beyond the scope of this paper.}. 
\subsection{Relation to Existing Riemannian Approaches}
\label{Section:Earlier_Riemannian_Approches}

Define $d_{(M')}^2(\theta)$ from $d_\text{AI}^2(\theta)$ by truncating the summation in (\ref{Eq:AI_model_using_Eigenvalue_sum}) to keep only the $M'$ largest eigenvalues,
\begin{align}
    d_{(M')}^2( \Rh,\R) = \sum_{m=1}^{M'} \log^2\left(\lambda_m\left\{\Rh\!{}^{-\sfrac{1}{2}}\R\Rh\!{}^{-\sfrac{1}{2}}\right\}\right).  \label{Eq:Earlier_Approaches_distances}
\end{align}
In the case of a single signal, the beamformers proposed in \cite{BibList:Coutino,BibList:Dong,BibList:Chahrour1} can be written as
\begin{align}
    P_{[38]}(\theta) &= \frac{1}{d_{(\:1\:)}^2\big(\:\Rh\:,\:\a_\theta\a_\theta^H\big)}
    \nonumber \\
    P_{[39]}(\theta) &= \frac{1}{d_{(\:1\:)}^2\big(\eta\Rh,\:\a_\theta\a_\theta^H\big)}
    \nonumber \\
    P_{[40]}(\theta) &= \frac{1}{d_{(M)}^2\big(\:\Rh\:,\:\a_\theta\a_\theta^H\!+\eta'\I\big)},
\label{Eq:Earlier_Approaches}
\end{align}
where the scaling parameters $\eta$ and $\eta'$ are defined by 
$\eta\triangleq(\hat\sigma_1^2\!+\!\hat\sigma_n^2/M)^{-1}$ and $\eta'\!\triangleq\textrm{std}\big\{\textrm{diag}\{\a_\theta\a_\theta^H\!\}\!\big\}$, and 
where $(\hat\sigma_1^2,\hat\sigma_n^2)$ are Maximum-Likelihood estimates for $(\sigma_1^2,\sigma_n^2)$. 

In Section \ref{Section:Invoking_Riemannian_Metrics}, we outlined advantages of the proposed LE and AI beamformers over the beamformers presented in \cite{BibList:Coutino,BibList:Dong,BibList:Chahrour1}.
Another advantage of the AI and LE beamformers concerns the physical meaning of the spatial spectra.
The AI beamformer $P_\text{AI}(\theta)$, the LE beamformer $P_\text{LE}(\theta)$, and the other beamformers in Table \ref{Table:Comparison} as well, have spatial spectra expressed in power units, the same power units as $\sigma_1^2$, $\sigma_n^2$ and $\Rh$. Conversely, in (\ref{Eq:Earlier_Approaches_distances}) and (\ref{Eq:Earlier_Approaches}), we see that due to the square-logarithm transformation, the units of the spatial spectra $P_{[38]}(\theta)$, $P_{[39]}(\theta)$ and $P_{[40]}(\theta)$ are not power units, but some arbitrary units devoid of physical meaning.
For example, power units should not be arbitrary in order to ensure meaningful PSLR analyses. 
Therefore, the beamformers $P_{[38]}(\theta)$, $P_{[39]}(\theta)$ and $P_{[40]}(\theta)$ cannot be directly considered as power spectra without educated manipulations and heuristic calibrations.

\section{Simulations \& Numerical Examples}

\label{Section:Simulations_and_Numerical_Examples}

In this section, we numerically demonstrate the properties of the proposed \ac{LE} beamformer $P_\text{LE}(\theta)$, whose closed-form expression is given in (\ref{LE_PowerSpectrum}), and compare it to the \ac{CB}, the \ac{MVDR} beamformer, and the Riemannian \ac{AI} beamformer. 

\subsection{Spatial spectrum at low SNR and small sample size}

In the first experiment, the considered phased-array is a \ac{ULA} consisting of $M = 9$ antennas with a half-wavelength antenna separation. This \ac{ULA} intercepts a signal with SNR of 0 [dB] propagating from direction 30 [deg]. The source direction is intentionally chosen away from boresight to add an extra layer of complexity because phased-arrays are less discriminative at endfire. The number of samples is $K=10$, which only critically exceeds the number of antennas. For each sample $k$, the measurement noise $\n_k$ is a zero-mean complex Gaussian vector of length $M$ with covariance matrix $\I$, since the noise power is $\sigma_n^2=1$. The signal $x_k$ is a zero-mean constant modulus signal with phase uniformly drawn in $[0,2\pi]$. 
Figure \ref{Fig:1} presents the spatial power spectrum for the proposed \ac{LE} beamformer $P_\text{LE}(\theta)$ (thick line), where it is compared to the \ac{MVDR} beamformer $P_\text{MV}(\theta)$ (thin line), the \ac{CB} $P_\text{CB}(\theta)$ (dotted line), and to the \ac{AI} spectrum $P_\text{AI}(\theta)$ (dashed line). The true direction of the source is depicted by a black circle.
\begin{figure}
\centering
\includegraphics[width=\linewidth,trim=4.6cm 10.5cm 5.4cm 10.9cm,clip=true]{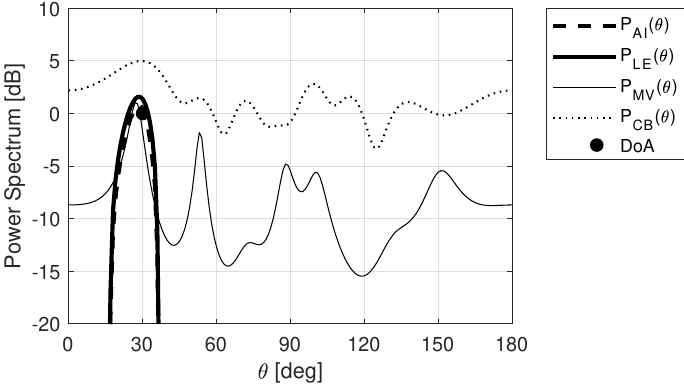}
\caption{The power spectra $P_\text{AI}(\theta)$, $P_\text{LE}(\theta)$, $P_\text{MV}(\theta)$ and $P_\text{CB}(\theta)$. 
The source direction is 30 [deg] and the incident SNR is 0 [dB] (black circle).}
\label{Fig:1}
\end{figure}
We see that the multiple peaks in the \ac{MVDR} spectrum $P_\text{MV}(\theta)$ are misleading and erroneously indicate signals in directions where none exists, e.g., in direction 53 [deg]. The spectrum of the \ac{CB} $P_\text{CB}(\theta)$ fluctuates around the 0 [dB] noise level, and even the actual direction is hardly identifiable. Conversely, the proposed \ac{LE} power spectrum $P_\text{LE}(\theta)$ has a single, unambiguous lobe in the actual DoA.
In addition, we see that the \ac{LE} spectrum is similar to the \ac{AI} spectrum but obtained much more efficiently through a closed-form expression rather than based on an exhaustive, computationally demanding search.

\begin{figure*}
\centering
\subfloat[Four uncorrelated signals (black circles).\label{Fig:2a}]{ \includegraphics[width=0.49\linewidth,trim=4.6cm 10.5cm 5.3cm 10.9cm,clip=true]{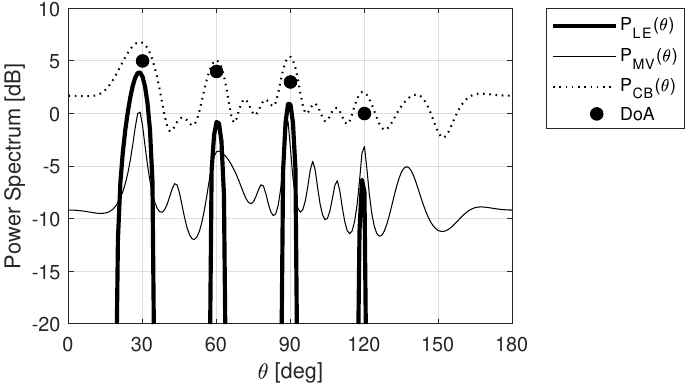}} 
\hfill
\subfloat[Three coherent signals (black circ.) and an uncorrelated signal (white circ.).\label{Fig:2b}]{ \includegraphics[width=0.49\linewidth,trim=4.6cm 10.5cm 5.3cm 10.9cm,clip=true]{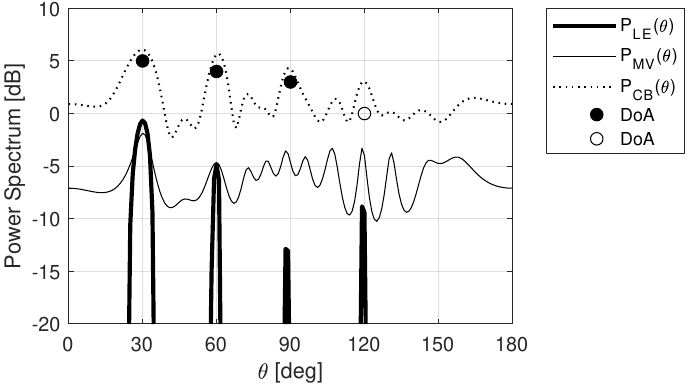}} 
\caption{Power spectra of the proposed \ac{LE} beamformer $P_\text{LE}(\theta)$, the \ac{MVDR} beamformer $P_\text{MV}(\theta)$, and the \ac{CB} $P_\text{CB}(\theta)$.
Four signals propagate from directions 30 [deg], 60 [deg], 90 [deg], and 120 [deg] with respective SNRs: 5 [dB], 4 [dB], 3 [dB], and 0 [dB].}
\label{Fig:2}
\end{figure*}

\begin{figure}
\centering
\includegraphics[width=1.01\linewidth,height=4.4cm,trim=4.6cm 10.5cm 5.3cm 10.9cm,clip=true]{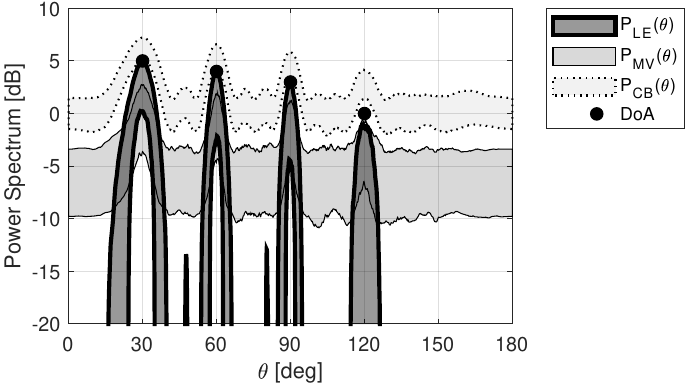} 
\caption{Sleeves of the 5 and 95 percentiles of the Log-Euclidean, the MVDR and the Conventional beamformers.}
\label{Fig:3}
\vspace{-0.5cm}
\end{figure}

In the second experiment, the ULA is composed of $M = 16$ antennas and four uncorrelated signals propagating from directions 30 [deg], 60 [deg], 90 [deg] and 120 [deg]. The SNRs of the incident signals are 5 [dB], 4 [dB], 3 [dB] and 0 [dB], respectively. The number of samples is $K=20$. 
Figure \ref{Fig:2a} compares the spatial power spectrum for the proposed \ac{LE} beamformer $P_\text{LE}(\theta)$ (thick line) with that of the \ac{MVDR} beamformer and the \ac{CB} (thin and dotted lines, respectively). 
We see here as well that the \ac{LE} spectrum $P_\text{LE}(\theta)$ exhibits unambiguous and clearly identifiable lobes at the sources directions (black circles). Conversely, the peaks in the \ac{MVDR} and \ac{CB} spectra are less distinct, and there are peaks in directions where no source exists, e.g., in direction 140 [deg]. 

An even more challenging scenario is considered in Figure \ref{Fig:2b}, in which three of the four signals are fully coherent (marked by black circles), e.g., representing the case of a line-of-sight ray degraded by two multipath rays. Only the fourth signal is kept uncorrelated (marked by a white circle). 
The phase differences between the line-of-sight ray and the two multipath rays, implemented as phasors, are arbitrarily set to $-\frac{2\pi}{3}$ and $+\frac{\pi}{2}$.
We see in the figure that the \ac{LE} spectrum identifies all the DoAs. Conversely, besides the strongest coherent signal at direction 30, the \ac{MVDR} (thin line) fails to identify all the other DoAs, including the uncorrelated one. The \ac{CB} (dotted line) exhibits in the actual DoAs some small lobes emerging above the noise-level fluctuations, but these lobes are clearly inferior to the lobes obtained by the proposed \ac{LE} beamformer (thick line). However, the ``sharpness'' of the \ac{LE} peaks is obtained at the expense of under-estimation of the signal power for each DoA.


We repeat the experiment presented in Fig. \ref{Fig:2a}, with four uncorrelated signals with decreasing powers and DoAs separated by $30$ [deg], $200$ times using i.i.d. realizations of the signals and noise, leading to different sample covariance matrices. The phased-array remains unchanged with $M=16$ elements and the number of samples is set to $K=20$, scarcely more than the number of antennas. 
%
The resulting spatial spectra are presented in Fig. \ref{Fig:3}, where the shades indicate the 5 and 95 percentiles for each of the beamformers.
We see that the \ac{LE} beamformer consists of sharp thin lobes at the directions of the signals, in contrast to the spatial spectra obtained by the other beamformers.

In the third experiment, we consider a dominant signal in direction $\theta_1=30$ [deg] and an uncorrelated weak signal in direction $\theta_2=60$ [deg]. 
The dominant signal SNR is increased from -10 to 50 [dB] while maintaining the weak signal power at 15 [dB] below the dominant signal power. 
The DoAs of the dominant and weak signals are estimated as the directions of the two largest peaks in the spatial power spectrum. The obtained RMSE for $200$ realizations with respect to the dominant signal as a function of the dominant signal SNR is presented in Fig. \ref{Fig:4a}. For reference, we also present the \ac{CRB}.
We see that for a dominant signal SNR from -10 [dB] to 30 [dB], the estimation errors of the dominant signal direction $\theta_1$ obtained by the CB and the LE beamformer are similar and slightly better than that of the MVDR beamformer. Beyond 30 [dB], the estimation error obtained by the CB beamformer increases because of the mainlobe bias induced by the weak signal. 
Conversely, the LE and MVDR errors are low, with a clear advantage for the MVDR beamformer at extremely high SNR. 
In addition, we see that for SNR values from 10 [dB] to 30 [dB], the LE estimator is optimal as it meets the 
\ac{CRB} with respect to $\theta_1$. 
Note that in (unrealistic) high SNR values, the MVDR also meets the bound.

In Fig. \ref{Fig:4b}, we present the root-mean-square error (RMSE) of the weak signal DoA estimates as a function of the weak signal SNR based on $200$ realizations.
Here, we see that the \ac{LE} beamformer outperforms both the \ac{CB} and the MVDR beamformer. 
The MVDR beamformer behaves poorly because the small sample size makes the sample covariance poorly conditioned, and its inverse is then degraded by large errors. 
The \ac{CB} performs even worse because the second largest peak in the CB spectrum is erroneously interpreted as the weak signal direction, but it is in fact one of the dominant signal sidelobes whose directions are $46$ [deg] and $162$ [deg] according to (\ref{Beampattern_Fading_and_Sidelobe}). The resulting RMSE of the \ac{CB} is then 
$\sqrt{(\left(46-\theta_2\right){}^2+\left(162-\theta_2\right){}^2)/2}\approx 72$ [deg]. 
This result is in accordance with the RMSE of the \ac{CB} in Fig. \ref{Fig:4b}.
Conversely, the LE beamformer neither requires inverting the covariance matrix nor suffers from high sidelobes. This experiment demonstrates the advantage of the LE beamformer in recovering DoAs of weak signals. Furthermore, we see that the LE estimation of the weak signal direction is close to optimal as it almost reaches the \ac{CRB}.

\begin{figure*}
\centering
\subfloat[RMSE of DoA estimates for the dominant signal as a function of its SNR.\label{Fig:4a}]{ \includegraphics[width=0.49\linewidth,trim=4.6cm 10.5cm 5.4cm 10.8cm,clip=true]{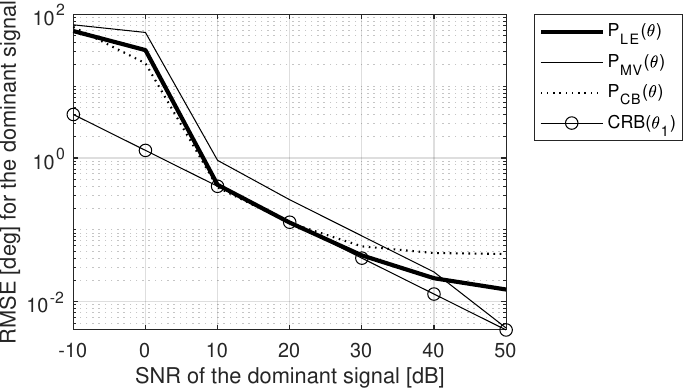}} 
\hfill
\subfloat[RMSE of DoA estimates for the weak signal as a function of its SNR.\label{Fig:4b}]{ \includegraphics[width=0.49\linewidth,trim=4.6cm 10.5cm 5.4cm 10.8cm,clip=true]{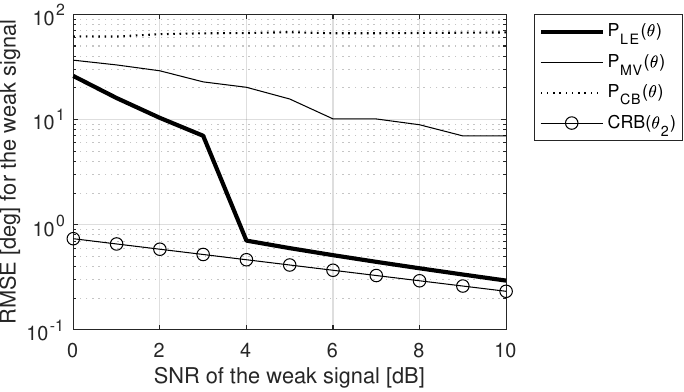}} 
\caption{Root-mean-square error (RMSE) of the directions estimates of the dominant and weak signals, for the \ac{LE}, MVDR and Conventional beamformers (thick line, thin line and dotted line). The SNR of the two signals increase jointly while keeping their power ratio constant and equal to 15[dB].}
\label{Fig:4}
\end{figure*}


\subsection{Demonstrating the spatial properties of the LE beamformer}

We experimentally demonstrate the closed-form expressions of the spatial properties presented in Section \ref{Section:LogEuclideanBeamformer}. The experimental setup includes a phased-array with $M=16$ antennas, and the noise level is set to $\sigma_n^2=1$. Two coherent signals propagate from directions $\theta_1=90$ [deg] and $\theta_2=60$ [deg], and a third signal, uncorrelated with the other two signals, propagates from direction $\theta_3=30$ [deg]. The SNRs of these signals are $\sigma_1^2=5$ [dB], $\sigma_2^2=3$ [dB] and $\sigma_3^2=0$ [dB], respectively. The phase difference between the two coherent signals is $+\frac{\pi}{2}$. The spatial spectra of the \ac{CB}, \ac{LE} and MVDR beamformers, obtained using the population covariance matrix, are represented in Fig. \ref{Fig:5a}. If the multipath ray were absent, the \ac{LE} power spectrum in direction $\theta_1$ would be $\sigma_1^2=5$ [dB]. However, the expression in (\ref{Analysis_LE_Sensitivity_to_multipath}) predicts in the presence of the multipath ray, the power spectrum in direction $\theta_1$ is $3.112$ [dB]. 
Indeed, the obtained power $P_\text{LE}(\theta_1)$ is $3.114$ [dB]. 

To evaluate the predicted beamwidth and sidelobe levels, the last experiment is repeated while keeping only the first signal in direction $\theta_1$, and removing the two other signals. The obtained spatial power spectrum is presented in Fig. \ref{Fig:5b}, focusing on the directions near $\theta_1=90$~[deg]. The mainlobe beamwidth predicted in (\ref{Appendix:LE_BW}) is $2.9$ [deg], while the obtained empirical beamwidth (marked by a horizontal arrow) is $2.5$ [deg]. 
In addition, the sidelobe level predicted in (\ref{Appendix:LE_PSLR}) is $-16.78$ [dB], while the obtained empirical sidelobe (marked by a vertical arrow) is $-16.45$ [deg]. 
We also see in the figure that the \ac{CB} has significantly higher sidelobes, because of noise. However, even in ideal noise-free conditions, the sidelobe level is approximately -13.26 [dB] (see \cite{stoica2005spectral}). Therefore, despite the presence of noise, the \ac{LE} sidelobe level still outperforms the ideal sidelobes of the \ac{CB}.

\begin{figure*}
\centering
\subfloat[Two coherent signals (black circ.) and an uncorrelated signal (white circ.).\label{Fig:5a}]{ \includegraphics[width=0.49\linewidth,trim=4.6cm 10.5cm 5.4cm 10.9cm,clip=true]{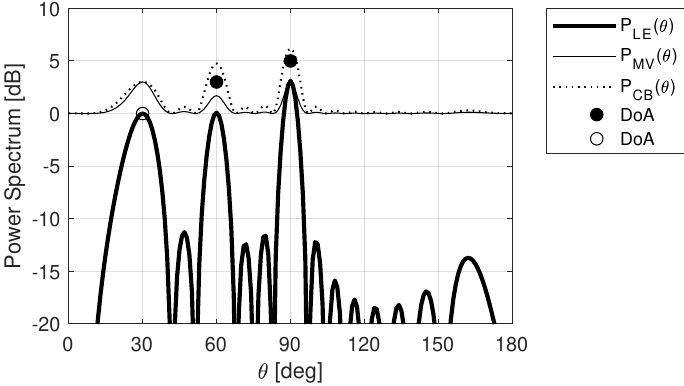}} 
\hfill
\subfloat[A single signal in direction $\theta_1=90\textrm{ [deg]}$. Zoom on the mainlobe.\label{Fig:5b}]{ \includegraphics[width=0.49\linewidth,trim=4.6cm 10.5cm 5.4cm 10.9cm,clip=true]{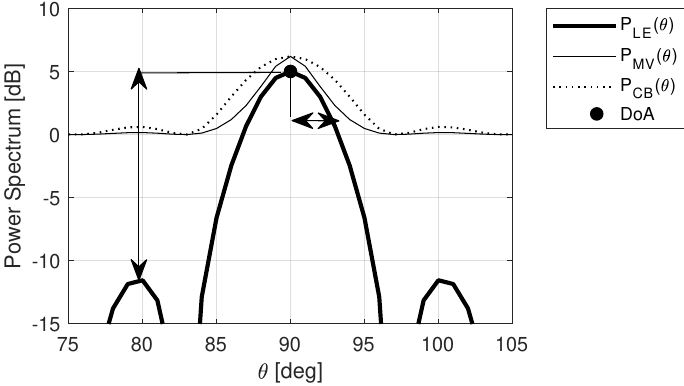}} 
\caption{Power spectrum of the \ac{LE} beamformer $P_\text{LE}(\theta)$, the MVDR beamformer $P_\text{MV}(\theta)$ and the Conventional beamformer $P_\text{CB}(\theta)$.\newline
The power spectra are evaluated using the population covariance matrix.}
\label{Fig:5}
\vspace{-0.3cm}
\end{figure*}

\section{Conclusions} \label{Section:Conclusion}

We introduced a new beamforming approach for \ac{DoA} estimation problems based on \ac{CF} that offers the advantage of exploiting the \ac{HPD} Riemannian geometry of the spatial covariance matrices. 
Following this approach, we presented a new beamformer that relies on the \ac{LE} metric. 
This LE beamformer has a closed-form expression, enabling efficient implementation and analysis of its spatial properties, demonstrating its theoretical advantages over the \ac{CB} and the \ac{MVDR} beamformer at low SNR. 
Numerical experiments illustrate additional advantages of the proposed \ac{LE} beamformer over the \ac{CB} and the \ac{MVDR} beamformer in scenarios that include small sample size with multiple, possibly coherent, signals. These results establish that in the context of \ac{DoA} estimation via \ac{CF}, the \ac{LE} metric, which accounts for the \ac{HPD} geometry of spatial covariance matrices, is a better goodness-of-fit criterion than the Euclidean metric. 
%
%
In future work, we plan to incorporate common techniques of robustification such as diagonal loading in the proposed LE beamformer. The development of robust variants of the LE beamformer will allow us to extend the experimental study and to compare our results to recently-proposed beamformers for DoA estimation.


\appendices

\section{Covariance Fitting Accounting for Noise}\label{Appendix:CovFit_with_noise}

The spectrum $P(\theta)$ obtained by \ac{CF} between $\Rh$ and $\R$ is
\begin{align}
        P(\theta) \!& = \argmin_{\sigma^2} d_\text{E}(\Rh,\R) 
        = \Trace\!\Big\{\!\!\left(\!(\Rh\!-\!\I)\!-\!\sigma^2 \a_\theta \a_\theta^H \!\right)^{\!\!2}\!\Big\}.\! 
\end{align}
Provided that the signal power $\sigma^2$ is not null, the value of $\sigma^2$ that minimizes the latter expression is obtained by matching to $0$ the derivative w.r.t $\sigma^2$, 
\begin{align}
        2\sigma^2 - 2\a_\theta^H(\Rh-\I) \a_\theta = 0.
\end{align}
Rearranging the terms, we get (\ref{Spectrum:CB}) by setting $P(\theta)=\sigma^2$.

Consider now the spectrum $P(\theta)$ obtained by \ac{CF} between $\Rh{}^{-1}$ and $\R^{-1}$.
Note that $\R{}^{-1}=-\alpha\a_\theta \a_\theta^H+\I$ where $\alpha\triangleq\frac{\sigma^2}{\sigma^2+1}$. 
We have
\begin{align}
        P(\theta) & \!=\! \argmin_{\sigma^2} d_\text{E}(\Rh{}^{-1}\!\!\!,\R{}^{-1}) \!=\!
        \Trace\!\Big\{\!\! \left((\Rh{}^{-1}\!\!\!-\!\I)\!+\!\alpha \a_\theta \a_\theta^H \!\right)\!^2\!\Big\} \nonumber\\
        & = \argmin_{\sigma^2} \alpha^2 + 2\alpha\a_\theta^H(\Rh{}^{-1}-\I) \a_\theta.
\end{align}
Provided that the signal power $\sigma^2$ is not null, the value of $\sigma^2$ that minimizes the latter expression is obtained by matching to $0$ the derivative w.r.t $\sigma^2$,
\begin{align}
    \left(2\alpha + 2\a_\theta^H(\Rh{}^{-1}\!\!-\I) \a_\theta\right)\frac{\partial \alpha}{\partial \sigma^2} = 0.
\end{align}
Note that $\frac{\partial \alpha}{\partial \sigma^2}=\frac{1}{(\sigma^2+1)^2}$. 
Inserting the expression for $\alpha$, rearranging the latter equation, and setting $P(\theta)=\sigma^2$ yields
\begin{align}
   P(\theta) = \frac{1}{\a_\theta^H\Rh{}^{-1} \a_\theta} - 1.
\end{align}
The latter expression is equivalent to the MVDR beamformer (\ref{Spectrum:MV}). 
This concludes the proof for our claim in Section \ref{Section:DirectionFinding_by_CovarianceFitting}.

\section{Analysis of the Log-Euclidean Beamformer}\label{Appendix:LE}

\subsection{Spatial Spectrum using the Sample Covariance Matrix}\label{Appendix:LE_Spatial_Spectrum}
The LE power spectrum $P_\text{LE}(\theta)$ defined in (\ref{LE_PowerSpectrum}) consists in minimizing w.r.t. $\sigma^2$ the LE distance between $\Rh$ and $\R$,
\begin{align}
d_\text{LE}^2\left( \Rh,\R \right) &= \left\| \log\Rh - \log\R \right\|_F^2    
\\
&=\Trace\left\{\!
\big(\!\log\Rh\!\big)^2 \!\!-\! 2\log\Rh\log\R \!+\! \big(\!\log\R\!\big)^2
\!\right\}
\nonumber
\end{align}
The value of $\sigma^2$ that minimizes the latter expression is obtained by matching to $0$ the derivative w.r.t $\sigma^2$. We have
\begin{align}
    \frac{\partial d_\text{LE}}{\partial \sigma^2}(\Rh,\R)
    &=\!
    \Trace\!\left\{\!
    - 2\log\Rh\frac{\partial \log\R}{\partial \sigma^2} \!+\!2 \frac{\partial \log\R}{\partial \sigma^2}\log\R
    \!\right\}.
    \nonumber
\end{align}
Matching the latter expression to 0 gives
\begin{align}
    \Trace\left\{ \log\Rh\frac{\partial \log\R}{\partial \sigma^2} \right\}
    =
    \Trace\left\{ \frac{\partial \log\R}{\partial \sigma^2}\log\R \right\}.
    \label{Appendix:LE_matching_to_0}
\end{align}
Note that the eigen-decomposition of the covariance model $\R=\sigma^2\a_\theta\a_\theta^H+\I$ takes the form
\begin{align}
    \R=\bmat{\a_\theta & \U_n} \bmat{\sigma^2+1 & \0b \\ \0b & \I}\bmat{\a_\theta^H \\ \U_n^H}.
\end{align}
where $\U_n$ denotes the eigenvectors associated with the noise subspace. The matrix-logarithm of $\R$ is then
\begin{align}
    \log\R \!=\! \bmat{\a_\theta , \U_n} \bmat{\log(\sigma^2\!+\!1) & \0b \\ \0b & \0b}\bmat{\a_\theta^H \\ \U_n^H}
    \!=\!\log(\sigma^2\!+\!1)\a_\theta\a_\theta^H.
    \nonumber
\end{align}
The derivative of $\log\R$ with respect to $\sigma^2$ is therefore
      $\frac{\a_\theta\a_\theta^H}{\sigma^2+1}$. 
Inserting this result in (\ref{Appendix:LE_matching_to_0}) gives immediately
\begin{align}
    \Trace\left\{ \log\Rh\frac{\a_\theta\a_\theta^H}{\sigma^2+\sigma_n^2} \right\}
    =
    \Trace\left\{ \frac{\a_\theta\a_\theta^H}{\sigma^2+\sigma_n^2}\log\R \right\}.
\end{align}
Rearranging the terms, we get
\begin{align}
    \a_\theta^H\log(\Rh)\a_\theta 
    =
    \a_\theta^H\log(\R)\a_\theta .
\end{align}
Since $\a_\theta$ is the eigenvector of $\R$ associated with the eigenvalue $\log(\sigma^2+1)$, then $\log(\R)\a_\theta=\log(\sigma^2+1)\a_\theta$. As a result, 
\begin{align}
    \a_\theta^H\log(\Rh)\a_\theta 
    =
    \a_\theta^H\left( \log(\sigma^2\!+\!1)\a_\theta \right) 
    =
    \log(\sigma^2\!+\!1).
\end{align}
Applying the exponential function on both sides gives
\begin{align}
    \exp\left(\a_\theta^H\log(\Rh)\a_\theta\right) 
    &=
    \sigma^2+1.
\end{align}
Rearranging the latter equation and setting $P_\text{LE}(\theta)=\sigma^2$, we get the \ac{LE} beamformer
\begin{align}
    P_\text{LE}(\theta) = \exp\left(\a_\theta^H\log(\Rh)\a_\theta\right) - 1.
    \label{Appendix:Analysis_LE_SpatialSpectrum}
\end{align}

\subsection{Spatial Spectrum using the Population Covariance Matrix}
\label{Appendix:LE_Fadings_Sidelobes}
Consider the population covariance matrix $\Rb$ given in (\ref{Eq:CovPopulation_without_interference}) and its eigen-decomposition,
\begin{align}
    \Rb = \bmat{\a_1 , \U_n'}\!\bmat{\sigma_1^2\!+\!1 & \0b \\  \0b  & \I}\!\bmat{\a_1^H \\ \U_n'^H}.
\end{align}
where $\U_n'$ is the matrix of the noise-subspace eigenvectors.
The matrix-logarithm of $\Rb$ is then
\begin{align}
    \log\Rb \!=\! \bmat{\a_1 , \U_n'} \bmat{\log(\sigma_1^2\!+\!1)\!\!\! & \0b \\ \0b & \0b}\bmat{\a_1^H \\ \U_n'^H}
    \!=\!\log(\sigma_1^2\!+\!1)\a_1\a_1^H.
    \nonumber
\end{align}
We now evaluate the operand 
in (\ref{Appendix:Analysis_LE_SpatialSpectrum}),
\begin{align}
    \a_\theta^H\! \log(\Rb) \a_\theta \!
    & = \log(\sigma_1^2\!+\!1)|\a_\theta^H\a_1|^2
    = \log(\sigma_1^2\!+\!1)b_\theta.
    \label{Appendix:Analysis_LE_SpatialSpectrum_LOG}
\end{align}
where the beampattern $b_\theta\triangleq|\a_\theta^H\a_1|^2$ is introduced in (\ref{Generic_b}). 
Inserting (\ref{Appendix:Analysis_LE_SpatialSpectrum_LOG}) in (\ref{Appendix:Analysis_LE_SpatialSpectrum}) gives
\begin{align}
    P_\text{LE}(\theta) 
    & = \exp\left(\log(\sigma_1^2\!+\!1) \: b_\theta \right)\!-\!1
     = (\sigma_1^2\!+\!1)^{b_\theta}\!-\!1.
    \label{Appendix:Analysis_LE_SpatialSpectrum_Final}
\end{align}

\subsection{Derivation of the Half-Power Beamwidth (HPBW)}
\label{Appendix:LE_Beamwidth_Mainlobe}

Consider the second-order Taylor approximation of the beampattern $b_\theta\triangleq |\a_1^H\a_\theta|^2$ in the vicinity of $\theta_1$,
\begin{align}
    b_\theta\approx b_1 
    + \Dot{b}_1(\theta-\theta_1)
    + \Ddot{b}_1 (\theta-\theta_1)^2/ 2,
    \label{Eq:Beamwidth_Taylor_2ndOrder}
\end{align}
where $b_1$, $\Dot{b}_1$ and $\Ddot{b}_1$ denote the beampattern and its derivatives with respect to $\theta$, evaluated at $\theta_1$. One can easily check that
\begin{align}
   \Dot{b}_1 &= \Big.\frac{\partial |\a_1^H\a_\theta|^2}{\partial \theta}\Big|_{\theta_1}
   =2 Re\{\a_1^H\Dot{\a}_1\}, \\
   \Ddot{b}_1 &= \Big.\frac{\partial^2 |\a_1^H\a_\theta|^2}{\partial \theta^2}\Big|_{\theta_1} 
=
2 Re\{\a_1^H\Ddot{\a}_1\}+
2|\a_1^H\Dot{\a}_1|^2.
\end{align}
Provided that the unit-norm steering vector $\a_\theta$ has been properly normalized, the three following identities hold:
\begin{align}
    \a_\theta^H\a_\theta&=1,
    \qquad
    \dot{\a}_\theta^H\a_\theta=0,
    \qquad
    \ddot{\a}_\theta^H\a_\theta=-\left\| \Dot{\a}_\theta \right\|^2,
\end{align}
where $\Dot{\a}_\theta$ and $\Ddot{\a}_\theta$ denote the derivatives of $\a_\theta$ with respect to $\theta$. These three identities evaluated at $\theta_1$ respectively yield
\begin{align}
    b_1 \triangleq |\a_1^H\a_1|^2 = 1, 
    \qquad
    \Dot{b}_1  = 0,
    \qquad
    \Ddot{b}_1 =  -2\left\| \Dot{\a}_1 \right\|^2.
\end{align}
Inserting these results in (\ref{Eq:Beamwidth_Taylor_2ndOrder}) gives
$b_\theta\approx 1 - (\theta-\theta_1)^2 \left\| \Dot{\a}_1 \right\|^2$. 
Rearranging the terms, we get $|\theta-\theta_1|  \approx \sqrt{1-b_\theta} / \left\| \Dot{\a}_1 \right\|$. 
The latter result, evaluated at $\theta=\theta_\text{BW}$, achieves the proof of (\ref{Appendix:Approx_BW}).

Note that in the case of a ULA with a large number of elements, when the SNR $\sigma_1\rightarrow\infty$, the expression (\ref{Appendix:Approx_BW}) approaches the classical expressions for HPBW, e.g. $\frac{0.89}{Md_\lambda}$. Indeed, since $\left\| \Dot{\a}_\theta \right\|^2\!=\!\frac{1}{12}\left(2\pi d_\lambda\right)^2\left(M^2-1\right)$ at boresight, then the \emph{two-sided} HPBW of the \ac{CB} spectrum given in Table \ref{Table:Properties} satisfies
\begin{align}
HPBW= \frac{2}{\left\| \Dot{\a}_1 \right\|}\sqrt{\frac{1}{2} +\frac{1}{2}\frac{1}{\sigma_1^2} } 
\rightarrow
\frac{0.78}{Md_\lambda} \approx \frac{0.89}{Md_\lambda}
\end{align}

\section{Power Under-Estimation due to Multipath} \label{Appendix:PowerUnderEstimation}

\subsection{Log-Euclidean Beamformer: Proof for (\ref{Analysis_LE_Sensitivity_to_multipath})}

Assume that the multipath in direction $\theta_2$ is not within the beampattern of the mainlobe centered on $\theta_1$. Then $\a_1^H\a_2$ has a small magnitude, and the eigen-decomposition of the population matrix $\Rb_2$ given in (\ref{Eq:Def_Rh2}) in the case of two correlated rays takes the form 
\begin{align}
    \Rb_2 &\approx \bmat{\u_s & \U_n''}\bmat{\sigma_1^2+\sigma_2^2+1 & \0b\\ \0b & \I}\bmat{\u_s & \U_n''}^H,
\end{align}
where the eigenvector $\u_s$ satisfies 
$\u_s=\frac{\sigma_1\a_1+\rho\sigma_2\a_2}{\left\|\sigma_1\a_1+\rho\sigma_2\a_2\right\|}$.
Then, the matrix-logarithm of $\Rb_2$ is
$\log\Rb_2 \!=\!\log(\sigma_1^2\!+\!\sigma_2^2\!+\!1)\u_s\u_s^H$. 
Since $\a_1^H\a_2$ has a small magnitude, then
\begin{align}
    \a_1^H\log\Rb_2\a_1 &=\log(\sigma_1^2+\sigma_2^2+1)\left|\a_1^H\u_s\right|^2, \\
    &\approx \log(\sigma_1^2+\sigma_2^2+1)\frac{\sigma_1^2}{\sigma_1^2+\sigma_2^2}.
\end{align}
Then, the power spectrum in direction $\theta_1$ is
\begin{align}
     P_\text{LE}(\theta_1|\Rb_2)&=
    \exp(\a_1^H \log(\Rb_2)\; \a_1)-1,
    \\
    &=
    (\sigma_1^2+\sigma_2^2+1)^{\frac{\sigma_1^2}{\sigma_1^2+\sigma_2^2}}-1.
\end{align}
which concludes the proof of the expression in (\ref{Analysis_LE_Sensitivity_to_multipath}).

\subsection{MVDR Beamformer: Proof for (\ref{Analysis_MV_Sensitivity_to_multipath})}
Consider the population covariance matrix $\Rb_2$ in the case of two coherent signals given in (\ref{Eq:Def_Rh2}).
\begin{align}
    \Rb_2=(\sigma_1\a_1 + \rho\sigma_2\a_2)(\sigma_1\a_1 + \rho\sigma_2\a_2)^H+\I.
\end{align}
The Searle identity \cite{BibList:Bernstein} yields a closed-form expression for the inverse of $\Rb_2$,
\begin{align}
    \Rb_2^{-1}=-\frac{(\sigma_1\a_1 + \rho\sigma_2\a_2)(\sigma_1\a_1 + \rho\sigma_2\a_2)^H}{1+\left\| \sigma_1\a_1 + \rho\sigma_2\a_2 \right\|^2}
    +\I.
\end{align}
Left- and right-multiplying by $\a_1$ gives
\begin{align}
    \a_1^H\Rb_2^{-1}\a_1=-\frac{\left| \sigma_1 + \rho\sigma_2\a_1^H\a_2 \right|^2}{1+\left\| \sigma_1\a_1 + \rho\sigma_2\a_2 \right\|^2}
    +1.
\end{align}
Since $|\a_1^H\a_2|\ll 1$, then
\begin{align}
    \a_1^H\Rb_2^{-1}\a_1\approx -\frac{\sigma_1^2}{1+\sigma_1^2 + \sigma_2^2}
    +1
    =
    \frac{1 + \sigma_2^2}{1+\sigma_1^2 + \sigma_2^2}.
\end{align}
The inverse of the left-hand side expression is identified as the MVDR spectrum in direction $\theta_1$. Therefore, 
\begin{align}
    P_\text{MV}(\theta_1|\Rb_2)
    \approx
    \frac{1+\sigma_1^2 + \sigma_2^2}{1 + \sigma_2^2}.
\end{align}

\subsection{Conventional Beamformer: Proof for (\ref{Analysis_CB_Sensitivity_to_multipath})}
Consider the population covariance matrix $\Rb_2$ in the case of two coherent signals given in (\ref{Eq:Def_Rh2}). Then, the power spectrum in direction $\theta_1$, given by $P_\text{CB}(\theta_1|\Rb_2)=
    \a_1^H \Rb_2\: \a_1$, satisfies
\begin{align}
    P_\text{CB}(\theta_1|\Rb_2)&=
    \a_1^H
    \left( \sigma_1\a_1 + \rho\sigma_2\a_2 \right)
    \left( \sigma_1\a_1 + \rho\sigma_2\a_2 \right)^H\a_1
    +1
    \nonumber
    \\
    &=
    \sigma_1^2+2\Real\left\{\a_1^H\a_2\sigma_1\sigma_2\rho\right\} + |\a_1^H\a_2|^2\sigma_2^2+1. 
    \nonumber
\end{align}
Assume that the multipath in direction $\theta_2$ is not within the beampattern of the mainlobe centered on $\theta_1$. Then $\a_1^H\a_2$ has a small magnitude, and we get $P_\text{CB}(\theta_1|\Rb_2)\approx\sigma_1^2+1$.



\bibliographystyle{IEEEtran}
\bibliography{main}

\begin{thebibliography}{10}
\providecommand{\url}[1]{#1}
\csname url@samestyle\endcsname
\providecommand{\newblock}{\relax}
\providecommand{\bibinfo}[2]{#2}
\providecommand{\BIBentrySTDinterwordspacing}{\spaceskip=0pt\relax}
\providecommand{\BIBentryALTinterwordstretchfactor}{4}
\providecommand{\BIBentryALTinterwordspacing}{\spaceskip=\fontdimen2\font plus
\BIBentryALTinterwordstretchfactor\fontdimen3\font minus \fontdimen4\font\relax}
\providecommand{\BIBforeignlanguage}[2]{{%
\expandafter\ifx\csname l@#1\endcsname\relax
\typeout{** WARNING: IEEEtran.bst: No hyphenation pattern has been}%
\typeout{** loaded for the language `#1'. Using the pattern for}%
\typeout{** the default language instead.}%
\else
\language=\csname l@#1\endcsname
\fi
#2}}
\providecommand{\BIBdecl}{\relax}
\BIBdecl

\bibitem{VanTrees}
H.~Van~Trees, ``Optimum array processing: Part iv,'' in \emph{Detection, Estimation, and Modulation Theory}.\hskip 1em plus 0.5em minus 0.4em\relax John Wiley \& Sons, 2004.

\bibitem{stoica2005spectral}
P.~Stoica and R.~L. Moses, \emph{Spectral analysis of signals}.\hskip 1em plus 0.5em minus 0.4em\relax Pearson Prentice Hall, NJ, 2005, vol. 452.

\bibitem{BibList:Balanis}
C.~Balanis, \emph{Antenna Theory: Analysis and Design}.\hskip 1em plus 0.5em minus 0.4em\relax Wiley, 2015.

\bibitem{BibList:Mailloux}
R.~Mailloux, \emph{Phased Array Antenna Handbook}.\hskip 1em plus 0.5em minus 0.4em\relax Artech House, 2018.

\bibitem{BibList:Wax}
M.~Wax, ``Model-based processing in sensor arrays,'' in \emph{Advances in Spectrum Analysis and Array Processing (Vol. III)}, S.~Haykin, Ed.\hskip 1em plus 0.5em minus 0.4em\relax Prentice-Hall, 1995, p. 1–47.

\bibitem{krim1996two}
H.~Krim and M.~Viberg, ``Two decades of array signal processing research: the parametric approach,'' \emph{IEEE signal processing mag.}, vol.~13, no.~4, pp. 67--94, 1996.

\bibitem{BibList:Pesavento}
M.~Pesavento, M.~Trinh-Hoang, and M.~Viberg, ``Three more decades in array signal processing research: An optimization and structure exploitation perspective,'' \emph{IEEE Signal Processing Mag.}, vol.~40, no.~4, pp. 92--106, 2023.

\bibitem{chakrabarty2017broadband}
S.~Chakrabarty and E.~A. Habets, ``Broadband doa estimation using convolutional neural networks trained with noise signals,'' in \emph{2017 IEEE Workshop on Applications of Signal Processing to Audio and Acoustics (WASPAA)}.\hskip 1em plus 0.5em minus 0.4em\relax IEEE, 2017, pp. 136--140.

\bibitem{chakrabarty2019multi}
------, ``Multi-speaker doa estimation using deep convolutional networks trained with noise signals,'' \emph{IEEE Journal of Selected Topics in Signal Processing}, vol.~13, no.~1, pp. 8--21, 2019.

\bibitem{BibList:Li}
J.~Li, P.~Stoica, and Z.~Wang, ``On robust capon beamforming and diagonal loading,'' \emph{IEEE transactions on signal processing}, vol.~51, no.~7, pp. 1702--1715, 2003.

\bibitem{BibList:Schmidt}
R.~O. Schmidt, \emph{A signal subspace approach to multiple emitter location and spectral estimation}.\hskip 1em plus 0.5em minus 0.4em\relax Stanford, 1982.

\bibitem{BibList:WeissMODE}
A.~J. Weiss, B.~Friedlander, and P.~Stoica, ``Direction-of-arrival estimation using mode with interpolated arrays,'' \emph{IEEE transactions on signal processing}, vol.~43, no.~1, pp. 296--300, 1995.

\bibitem{BibList:Viberg}
M.~Viberg, B.~Ottersten, and T.~Kailath, ``Detection and estimation in sensor arrays using weighted subspace fitting,'' \emph{IEEE transactions on Signal Processing}, vol.~39, no.~11, pp. 2436--2449, 1991.

\bibitem{BibList:Ottersten0}
B.~Ottersten, P.~Stoica, and R.~Roy, ``Covariance matching estimation techniques for array signal processing applications,'' \emph{Digital Signal Processing}, vol.~8, no.~3, pp. 185--210, 1998.

\bibitem{BibList:Ottersten}
B.~Ottersten, M.~Viberg, and T.~Kailath, ``Analysis of subspace fitting and ml techniques for parameter estimation from sensor array data,'' \emph{IEEE Transactions on Signal Processing}, vol.~40, no.~3, pp. 590--600, 1992.

\bibitem{BibList:WaxSSM}
M.~Wax and A.~Adler, ``Vector set classification by signal subspace matching,'' \emph{IEEE Transactions on Information Theory}, vol.~69, no.~3, pp. 1853--1865, 2022.

\bibitem{BibList:Trinh}
M.~Trinh-Hoang, M.~Viberg, and M.~Pesavento, ``An improved doa estimator based on partial relaxation approach,'' in \emph{International Conf. on Acoustics, Speech and Signal Proc. (ICASSP)}.\hskip 1em plus 0.5em minus 0.4em\relax IEEE, 2018, pp. 3246--3250.

\bibitem{BibList:Schenck}
D.~Schenck, M.~Trinh, H.~X. Mestre, M.~Viberg, and M.~Pesavento, ``Full covariance fitting doa estimation using partial relaxation framework,'' in \emph{27th European Signal Proc. Conf. (EUSIPCO)}.\hskip 1em plus 0.5em minus 0.4em\relax IEEE, 2019, pp. 1--5.

\bibitem{BibList:Yardibi}
T.~Yardibi, J.~Li, P.~Stoica, and L.~N. Cattafesta~III, ``Sparsity constrained deconvolution approaches for acoustic source mapping,'' \emph{The Journal of the Acoustical Society of America}, vol. 123, no.~5, pp. 2631--2642, 2008.

\bibitem{BibList:Picard}
J.~S. Picard and A.~J. Weiss, ``Direction finding of multiple emitters by spatial sparsity and linear programming,'' in \emph{9th International Symposium on Communications and Information Technology}.\hskip 1em plus 0.5em minus 0.4em\relax IEEE, 2009, pp. 1258--1262.

\bibitem{BibList:Zheng}
J.~Zheng, M.~Kaveh, and H.~Tsuji, ``Sparse spectral fitting for direction of arrival and power estimation,'' in \emph{2009 IEEE/SP 15th Workshop on Statistical Signal Processing}.\hskip 1em plus 0.5em minus 0.4em\relax IEEE, 2009, pp. 429--432.

\bibitem{BibList:Delmer}
A.~Delmer, A.~Ferr{\'e}ol, and P.~Larzabal, ``On regularization parameter for l0-sparse covariance fitting based doa estimation,'' in \emph{International Conf. on Acoustics, Speech and Signal Proc. (ICASSP)}.\hskip 1em plus 0.5em minus 0.4em\relax IEEE, 2020, pp. 4552--4556.

\bibitem{BibList:Lasso}
R.~Tibshirani, ``Regression shrinkage and selection via the lasso,'' \emph{Journal of the Royal Statistical Society Series B: Stat. Methodology}, vol.~58, no.~1, pp. 267--288, 1996.

\bibitem{pennec2006riemannian}
X.~Pennec, P.~Fillard, and N.~Ayache, ``A riemannian framework for tensor computing,'' \emph{International Journal of computer vision}, vol.~66, no.~1, pp. 41--66, 2006.

\bibitem{BibList:Bhatia}
R.~Bhatia, \emph{Positive definite matrices}.\hskip 1em plus 0.5em minus 0.4em\relax Princeton university press, 2009.

\bibitem{thanwerdas2022bures}
Y.~Thanwerdas and X.~Pennec, ``Bures-wasserstein minimizing geodesics between covariance matrices of different ranks,'' \emph{arXiv preprint arXiv:2204.09928}, 2022.

\bibitem{han2021generalized}
A.~Han, B.~Mishra, P.~Jawanpuria, and J.~Gao, ``Generalized bures-wasserstein geometry for positive definite matrices,'' \emph{arXiv preprint arXiv:2110.10464}, 2021.

\bibitem{barachant2011multiclass}
A.~Barachant, S.~Bonnet, M.~Congedo, and C.~Jutten, ``Multiclass brain--computer interface classification by riemannian geometry,'' \emph{IEEE Transactions on Biomedical Engineering}, vol.~59, no.~4, pp. 920--928, 2011.

\bibitem{barachant2013classification}
------, ``Classification of covariance matrices using a {R}iemannian-based kernel for {BCI} applications,'' \emph{Neurocomputing}, vol. 112, pp. 172--178, 2013.

\bibitem{rodrigues2017dimensionality}
P.~Rodrigues, F.~Bouchard, M.~Congedo, and C.~Jutten, ``Dimensionality reduction for {BCI} classification using {R}iemannian geometry,'' in \emph{BCI 2017-7th International Brain-Computer Interface Conference}, 2017.

\bibitem{rodrigues2018riemannian}
P.~L.~C. Rodrigues, C.~Jutten, and M.~Congedo, ``Riemannian procrustes analysis: transfer learning for brain--computer interfaces,'' \emph{IEEE Transactions on Biomedical Engineering}, vol.~66, no.~8, pp. 2390--2401, 2018.

\bibitem{yair2019parallel}
O.~Yair, M.~Ben-Chen, and R.~Talmon, ``Parallel transport on the cone manifold of spd matrices for domain adaptation,'' \emph{IEEE Transactions on Signal Processing}, vol.~67, no.~7, pp. 1797--1811, 2019.

\bibitem{thanwerdas2019affine}
Y.~Thanwerdas and X.~Pennec, ``Is affine-invariance well defined on spd matrices? a principled continuum of metrics,'' in \emph{International Conference on Geometric Science of Information}.\hskip 1em plus 0.5em minus 0.4em\relax Springer, 2019, pp. 502--510.

\bibitem{BibList:Arsigny}
V.~Arsigny, P.~Fillard, X.~Pennec, and N.~Ayache, ``Log-euclidean metrics for fast and simple calculus on diffusion tensors,'' \emph{Magnetic Resonance in Medicine: An Official Journal of the Int. Society for Magnetic Resonance in Medicine}, vol.~56, no.~2, pp. 411--421, 2006.

\bibitem{arsigny2007geometric}
------, ``Geometric means in a novel vector space structure on symmetric positive-definite matrices,'' \emph{SIAM journal on matrix analysis and applications}, vol.~29, no.~1, pp. 328--347, 2007.

\bibitem{BibList:Chevallier}
S.~Chevallier, E.~K. Kalunga, Q.~Barth{\'e}lemy, and E.~Monacelli, ``Review of riemannian distances and divergences, applied to ssvep-based bci,'' \emph{Neuroinformatics}, vol.~19, no.~1, pp. 93--106, 2021.

\bibitem{sra2012new}
S.~Sra, ``A new metric on the manifold of kernel matrices with application to matrix geometric means,'' in \emph{Advances in neural info. proc. systems}, vol.~25, 2012, pp. 144--152.

\bibitem{BibList:Coutino}
M.~Coutino, R.~Pribic, and G.~Leus, ``Direction of arrival estimation based on information geometry,'' in \emph{IEEE Intern. Conference on Acoustics, Speech and Signal Processing (ICASSP)}.\hskip 1em plus 0.5em minus 0.4em\relax IEEE, 2016, pp. 3066--3070.

\bibitem{BibList:Dong}
Y.-Y. Dong, C.-X. Dong, W.~Liu, M.-M. Liu, and Z.-Z. Tang, ``Scaling transform based information geometry method for doa estimation,'' \emph{IEEE Transactions on Aerospace and Electronic Systems}, vol.~55, no.~6, pp. 3640--3650, 2019.

\bibitem{BibList:Chahrour1}
H.~Chahrour, R.~Dansereau, S.~Rajan, and B.~Balaji, ``Direction of arrival estimation using riemannian mean and distance,'' in \emph{2019 IEEE Radar Conference (RadarConf)}.\hskip 1em plus 0.5em minus 0.4em\relax IEEE, 2019, pp. 1--5.

\bibitem{BibList:AmirWeiss}
A.~Weiss, ``Blind direction-of-arrival estimation in acoustic vector-sensor arrays via tensor decomposition and kullback-leibler divergence covariance fitting,'' \emph{IEEE Tran. on Signal Processing}, vol.~69, pp. 531--545, 2020.

\bibitem{BibList:Pinele}
J.~Pinele, J.~E. Strapasson, and S.~I. Costa, ``The fisher--rao distance between multivariate normal distributions: Special cases, bounds and applications,'' \emph{Entropy}, vol.~22, no.~4, p. 404, 2020.

\bibitem{bar2023interference}
A.~Bar and R.~Talmon, ``On interference-rejection using riemannian geometry for direction of arrival estimation,'' \emph{arXiv preprint arXiv:2301.03399}, 2023.

\bibitem{capon1969high}
J.~Capon, ``High-resolution frequency-wavenumber spectrum analysis,'' \emph{Proceedings of the IEEE}, vol.~57, no.~8, pp. 1408--1418, 1969.

\bibitem{bhatia2006riemannian}
R.~Bhatia and J.~Holbrook, ``Riemannian geometry and matrix geometric means,'' \emph{Linear algebra and its applications}, vol. 413, no. 2-3, pp. 594--618, 2006.

\bibitem{BibList:Horev}
I.~Horev, F.~Yger, and M.~Sugiyama, ``Geometry-aware principal component analysis for symmetric positive definite matrices,'' in \emph{Asian Conference on Machine Learning}.\hskip 1em plus 0.5em minus 0.4em\relax PMLR, 2016, pp. 1--16.

\bibitem{stein1986lectures}
C.~Stein, ``Lectures on the theory of estimation of many parameters,'' \emph{Journal of Soviet Mathematics}, vol.~34, pp. 1373--1403, 1986.

\bibitem{BibList:Bernstein}
D.~S. Bernstein, \emph{Matrix mathematics: theory, facts, and formulas}.\hskip 1em plus 0.5em minus 0.4em\relax Princeton university press, 2009.

\end{thebibliography}

\clearpage
\vspace*{1cm}
\section*{\Huge Supplementary Material}
\vspace*{1cm}

\section{Analysis of the CB Beamformer}\label{Appendix:Analysis_CB}

\subsection{Fadings and Sidelobes of the Spatial Spectrum}

Consider the Conventional Beamformer $P_\text{CB}(\theta) = \a_\theta^H \Rb \a_\theta$. 
According to (\ref{Eq:CovPopulation_without_interference}), the population covariance matrix in the case of a single emitter is $\Rb = \sigma_1^2 \a_1 \a_1^H +\I$, therefore
\begin{align}
    P_\text{CB}(\theta) 
    & = \a_\theta^H (\sigma_1^2\a_1\a_1^H+\I) \a_\theta \\
    & = \sigma_1^2\a_\theta^H \a_1\a_1^H\a_\theta+1  \\
    & = \sigma_1^2 b_\theta+1,
    \label{Appendix:Analysis_CB_SpatialSpectrum}
\end{align}
where $b_\theta\triangleq\left|\a_\theta^H\a_1\right|^2$ is the beampattern introduced in (\ref{Generic_b}). 
Since the spectrum is a linear function of $b_\theta$, the fadings and the sidelobes of $P_\text{CB}(\theta)$ occur at the same directions than the fadings and the sidelobes of the beampattern $b_\theta$.

\subsection{Derivation of the Beamwidth}

According to (\ref{Appendix:Analysis_CB_SpatialSpectrum}), the CB spectrum in the actual direction $\theta_1$ and the half-power direction $\theta_\text{BW}$ are 
\begin{align}
    P_\text{CB}(\theta_\text{BW})&=\sigma_1^2 b_\text{BW}+1,
    \label{Appendix:P_CB_BW}
    \\
    P_\text{CB}(\theta_1)&=\sigma_1^2+1.
    \label{Appendix:P_CB_peak}    
\end{align}
By definition, we have
$    P_\text{CB}(\theta_\text{BW})=\frac{1}{2}P_\text{CB}(\theta_1)$. 
Therefore, 
    $\sigma_1^2 b_\text{BW}+1=
    \frac{1}{2}\left(
    \sigma_1^2+1
    \right)$. 
Rearranging the terms, we get
\begin{align}
     b_\text{BW}=
    \frac{1}{2}\left(1 -\frac{1}{\sigma_1^2} \right).
\end{align}
Inserting this result in the expression for $BW$ in (\ref{Appendix:Approx_BW}) gives
\begin{align}
    BW & \approx \frac{1}
    {\left\| \Dot{\a}_1 \right\|}\sqrt{1-b_\text{BW}}  =\frac{1}
    {\left\| \Dot{\a}_1 \right\|}\sqrt{\frac{1}{2} +\frac{1}{2}\frac{1}{\sigma_1^2} }.
    \label{Appendix:CB_BW}
\end{align}

\subsection{Derivation of the PSLR}

The CB spectrum values in the actual direction $\theta_1$ and the sidelobe direction $\theta_\text{SL}$ are:
\begin{align}
    P_\text{CB}(\theta_1)&=\sigma_1^2 \left|\a_1^H\a_1\right|^2 + 1 = \sigma_1^2+1,
    \\
    P_\text{CB}(\theta_\text{SL})&=\sigma_1^2 \left|\a_1^H\a_{SL}\right|^2 + 1 = \sigma_1^2 b_\text{SL}+1.
\end{align}
where $b_\text{SL}\triangleq|\a_1^H\a_{SL}|^2$. 
Therefore, the PSLR is
\begin{align}
    PSLR_\text{CB}&\triangleq\frac{P_\text{CB}(\theta_1)}{P_\text{CB}(\theta_\text{SL})}
   =
    \frac{ \sigma_1^2+1 }{ \sigma_1^2 b_\text{SL}+1 }.
\end{align}

\IEEEpubidadjcol 

\mbox{}\newline
\section{Analysis of the MV Beamformer}\label{Appendix:Analysis_MV}

\subsection{Fadings and Sidelobes of the Spatial Spectrum}

Consider the MVDR beamformer $P_\text{MV}(\theta) = \frac{1}{\a_\theta^H \Rb^{-1} \a_\theta}$. 
Note that the inverse of the population covariance matrix $\Rb=\sigma_1^2\a_1\a_1^H\!+\!\I$ is given by 
$\Rb{}^{-1}\!=-\frac{\sigma_1^2}{\sigma_1^2+1}\a_1\a_1^H\!+\!\I$. Then,
\begin{align}
    P_\text{MV}(\theta) 
    & = \frac{1}{\a_\theta^H \left(-\frac{\sigma_1^2}{\sigma_1^2+1}\a_1\a_1^H+\I\right) \a_\theta} 
    \\
    & =
    \frac{1}{ -\frac{\sigma_1^2}{\sigma_1^2+1}b_\theta + 1 } 
     =
    \frac{\sigma_1^2+1}{ \sigma_1^2 (1-b_\theta)+1 }, 
    \label{Appendix:Analysis_MV_SpatialSpectrum}
\end{align}
where $b_\theta\triangleq\left|\a_\theta^H\a_1\right|^2$ is introduced in (\ref{Generic_b}). 
Since the spectrum increases with $b_\theta$, the fadings and the sidelobes of $P_\text{MV}(\theta)$ occur at the same directions than the fadings and the sidelobes of the beampattern $b_\theta$.

\subsection{Derivation of the Beamwidth}

According to (\ref{Appendix:Analysis_MV_SpatialSpectrum}), the MVDR spectrum in the actual direction $\theta_1$ and the half-power direction $\theta_\text{BW}$ are 
\begin{align}
    P_\text{MV}(\theta_\text{BW})&=\frac{\sigma_1^2+1}{ \sigma_1^2 (1-b_\text{BW})+1 },
    \label{Appendix:P_MV_BW}
    \\
    P_\text{MV}(\theta_1)&=\sigma_1^2+1.
    \label{Appendix:P_MV_peak}
\end{align}
By definition, we have $P_\text{MV}(\theta_\text{BW})=\frac{1}{2}P_\text{MV}(\theta_1)$. Therefore,
\begin{align}
    \frac{\sigma_1^2+1}{ \sigma_1^2 (1-b_\text{BW})+1 }=
    \frac{\sigma_1^2+1}{2}.
\end{align}
Rearranging the terms, we get $b_\text{BW} = 1-\frac{1}{\sigma_1^2}$. 
Inserting this result in the expression for the beamwidth established in (\ref{Appendix:Approx_BW}) gives
\begin{align}
    BW & \approx \frac{1}
    {\left\| \Dot{\a}_1 \right\|}\sqrt{1-b_\text{BW}}  =
    \frac{1}
    {\left\| \Dot{\a}_1 \right\|}\frac{1}{\sqrt{\sigma_1^2}}.
    \label{Appendix:MV_BW}
\end{align}

\subsection{Derivation of the PSLR}

The MV spectrum values in the actual direction $\theta_1$ and the sidelobe direction $\theta_\text{SL}$ are:
\begin{align}
    P_\text{MV}(\theta_1)&=\frac{\sigma_1^2+1}{\sigma_1^2(1-\left|\a_1^H\a_1\right|^2)+1}
    =
    \sigma_1^2+1,
    \\
    P_\text{MV}(\theta_\text{SL})&
    =
    \frac{\sigma_1^2+1}{\sigma_1^2(1-b_\text{SL})+1}.
\end{align}
where $b_\text{SL}\triangleq|\a_1^H\a_{SL}|^2$. 
Therefore, the PSLR is
\begin{align}
    PSLR_\text{MV}&\triangleq\frac{P_\text{MV}(\theta_1)}{P_\text{MV}(\theta_\text{SL})}
   =
   \sigma_1^2(1-b_\text{SL})+1.
\end{align}

\mbox{}\newline
\section{Analysis of the Kullback-Leibler Beamformer}\label{Appendix:KL}

Consider the model (\ref{Eq:CovModel}) for the covariance matrix. 
In this Appendix, we will invoke the following identities
\begin{align}
    &\R{}^{-1}\a_\theta=\frac{1}{\sigma^2+1}\a_\theta,
    \\
    &\partial \M{}^{-1} = -\M{}^{-1} (\partial \M) \M{}^{-1},
    \\
    &\partial\log\Det(\M) = \Trace\left\{\M^{-1}(\partial\M)\right\}.
\end{align}
The Kullback-Leibler (KL) power spectrum $P_\text{KL}^{(1)}(\theta)$ defined in (\ref{KL_PowerSpecvrtum_1}) consists in minimizing w.r.t. $\sigma^2$ the KL divergence between $\Rh$ and $\R$ defined by
\begin{align}
    d_\text{KL}(\Rh,\R)\triangleq\Trace\Big(\R{}^{-1}\Rh-\I\Big)-\log\Det(\R{}^{-1}\Rh).
\end{align}
The value of $\sigma^2$ that minimizes the latter expression is obtained by matching to $0$ the derivative w.r.t $\sigma^2$. We have
\begin{align}
    &\frac{\partial d_\text{KL}}{\partial \sigma^2}(\Rh,\R)=\Trace\Big(\frac{\partial\R{}^{-1}}{\partial\sigma^2}\Rh\Big)-\frac{\partial\log\Det(\R{}^{-1}\Rh)}{\partial \sigma^2}
    \\
    &=+\Trace\Big(\frac{\partial\R{}^{-1}}{\partial\sigma^2}\Rh\Big)
    \!-\!\Trace
    \left\{\!(\R{}^{-1}\Rh)^{-1}\frac{\partial (\R{}^{-1}\Rh)} {\partial \sigma^2}\right\}
    \nonumber\\
    &=+\Trace\Big(\frac{\partial\R{}^{-1}}{\partial\sigma^2}\Rh\Big)
    \!-\!\Trace
    \left\{\!(\Rh{}^{-1}\R)\frac{\partial \R{}^{-1}} {\partial \sigma^2}\Rh\right\}
    \nonumber\\
    &=+\Trace
    \left\{\!(\I-\Rh{}^{-1}\R)\frac{\partial \R{}^{-1}} {\partial \sigma^2}\Rh\right\}
    \nonumber\\
    &=-\Trace
    \left\{\!(\I-\Rh{}^{-1}\R)\R{}^{-1}\frac{\partial \R} {\partial \sigma^2}\R{}^{-1}\Rh\right\}
    \nonumber\\
    &=-\Trace
    \left\{\!(\I-\Rh{}^{-1}\R)\R{}^{-1}\a_\theta\a_\theta^H\R{}^{-1}\Rh\right\}.
    \nonumber   
\end{align}
Rearranging the terms in the trace, we immediately get
\begin{align}
    &\frac{\partial d_\text{KL}}{\partial \sigma^2}(\Rh,\R)=-\a_\theta^H\R{}^{-1}\Rh(\I-\Rh{}^{-1}\R)\R{}^{-1}\a_\theta
    \\    
    &=-(\a_\theta^H\R{}^{-1}\Rh\R{}^{-1}\a_\theta-\a_\theta^H\R{}^{-1}\a_\theta)
    \nonumber\\
    &=-\frac{1}{(\sigma^2+1)^2}\a_\theta^H\Rh\a_\theta+\frac{1}{\sigma^2+1}.
    \label{Appendix_KL_diff_1}
\end{align}
where the last equality results from the first aforementioned identity. Matching to zero, we get
\begin{align}
    \sigma^2
   =
   \a_\theta^H\Rh\a_\theta - 1,
\end{align}
which achieves the proof of (\ref{KL_PowerSpecvrtum_1}).

The power spectrum (\ref{KL_PowerSpecvrtum_2}) consists in minimizing w.r.t. $\sigma^2$ the KL divergence between $\R$ and $\Rh$ defined by
\begin{align}
    d_\text{KL}(\R,\Rh)\triangleq\Trace\Big(\Rh{}^{-1}\R-\I\Big)-\log\Det(\Rh{}^{-1}\R).
\end{align}
The value of $\sigma^2$ that minimizes the latter expression is obtained by matching to $0$ the derivative w.r.t $\sigma^2$. We have
\begin{align}
    &\frac{\partial d_\text{KL}}{\partial \sigma^2}(\R,\Rh)=\Trace\Big(\Rh{}^{-1}\frac{\partial\R}{\partial\sigma^2}\Big)-\frac{\partial\log\Det(\Rh{}^{-1}\R)}{\partial \sigma^2}
    \\
    &= +\Trace\Big(\Rh{}^{-1}\frac{\partial\R}{\partial\sigma^2}\Big)-\Trace\left\{
    (\Rh{}^{-1}\R)^{-1}
    \frac{\partial(\Rh{}^{-1}\R)}{\partial \sigma^2}\right\}
    \nonumber\\
    &= +\Trace\Big(\Rh{}^{-1}\frac{\partial\R}{\partial\sigma^2}\Big)-\Trace\left\{
    (\R{}^{-1}\Rh)\Rh{}^{-1}
    \frac{\partial\R}{\partial \sigma^2}\right\}
    \nonumber\\
    &= \Trace\left\{
    (\I-\R{}^{-1}\Rh)\Rh{}^{-1}
    \frac{\partial\R}{\partial \sigma^2}\right\}
    \nonumber\\
    &= \Trace\left\{
    (\I-\R{}^{-1}\Rh)\Rh{}^{-1}
    \a_\theta\a_\theta^H\right\}.
\end{align}
Rearranging the terms in the trace, we immediately get
\begin{align}
    \frac{\partial d_\text{KL}}{\partial \sigma^2}(\R,\Rh)&= 
    \a_\theta^H(\I-\R{}^{-1}\Rh)\Rh{}^{-1}\a_\theta
    \nonumber\\
    &=
    \a_\theta^H\Rh{}^{-1}\a_\theta-\a_\theta^H\R{}^{-1}\a_\theta
    \nonumber\\
    &=\a_\theta^H\Rh{}^{-1}\a_\theta-\frac{1}{\sigma^2+1},
    \label{Appendix_KL_diff_2}
\end{align}
where the last equality results from the first aforementioned identity. Matching to zero, we get
\begin{align}
    \sigma^2=\frac{1}{\a_\theta^H\Rh{}^{-1}\a_\theta}-1,
\end{align}
which achieves the proof of (\ref{KL_PowerSpecvrtum_2}).


\mbox{}\newline
\section{Analysis of the Log-Determinant Beamformer}\label{Appendix:LD}

The Log-Determinant (LD) power spectrum $P_\text{LD}(\theta)$ defined in (\ref{LDPowerSpectrum}) 
consists in minimizing w.r.t. $\sigma^2$ the LD distance between $\Rh$ and $\R$ defined by
\begin{align}
    d_\text{LD}^2\!\left( \!\Rh,\!\R \!\right)\!\triangleq\!
    \log\!\left(\!\Det\left\{\!\!\frac{\Rh\!+\!\R}{2}\!\!\right\}\!\right) \!-\! 
    \frac{\log\!\left(\!\Det\left\{\!\Rh\R\!\right\}\!\right)}{2}.
\end{align}
The value of $\sigma^2$ that minimizes $d_\text{LD}^2\left( \Rh,\R \right)$ is obtained by matching to $0$ the derivative w.r.t $\sigma^2$. We immediately get
\begin{align}
    \frac{\partial}{\partial\sigma^2}\log\!\left(\!\!\Det\left\{\!\frac{\Rh\!+\!\R}{2}\!\right\}\!\!\right)
    \!\!-\!\!
    \frac{1}{2}\frac{\partial}{\partial\sigma^2}\log\!\left(\!\Det\{\Rh\R\}\!\right)
    \!=\!0.
    \label{Eq:LogDet_Derivative}
\end{align}
Note that the log-determinant derivative satisfies the identity $\frac{\partial}{\partial\sigma^2}\log\left(\Det\{\M\}\right)=\Trace\left\{\M^{-1}\frac{\partial\M}{\partial\sigma^2}\right\}$. 
Therefore
\begin{align}
    \!\!\!\!
    \begin{cases}
        \frac{\partial}{\partial\sigma^2}\log\!\left(\Det\left\{\frac{\Rh+\R}{2}\right\}\right)=    \Trace\left\{\left(\Rh+\R\right)^{-1}\frac{\partial\R}{\partial\sigma^2}\right\} 
        \\
        \frac{\partial}{\partial\sigma^2}\log\!\left(\Det\{\Rh\R\}\right)=\Trace\left\{(\Rh\R)^{-1}\Rh\frac{\partial(\R)}{\partial\sigma^2}\right\}
    \end{cases}
\end{align}
Since the derivative of $\R$ w.r.t. $\sigma^2$ is $\a_\theta\a_\theta^H$, we get after few manipulations
\begin{align}
    \!\!\!\!\!\!
    \begin{cases}
        \frac{\partial}{\partial\sigma^2}\log\!\left(\Det\left\{\frac{\Rh+\R}{2}\right\}\right)=    \a_\theta^H\left(\Rh+\R\right)^{-1}\a_\theta
        \\
        \frac{\partial}{\partial\sigma^2}\log\!\left(\Det\{\Rh\R\}\right)=\a_\theta^H\R^{-1}\a_\theta
    \end{cases}
\end{align}
Then, equation (\ref{Eq:LogDet_Derivative}) yields
\begin{align}
    \a_\theta^H(\Rh+\R)^{-1}\a_\theta
    -
    \frac{1}{2}\a_\theta^H\R^{-1}\a_\theta
    =0. 
\end{align}
Inserting the expression for $\R$ and $\R{}^{-1}$, the latter equation reduces to
\begin{align}
    \a_\theta^H(\Rh+\sigma^2\a_\theta\a_\theta^H+\I)^{-1}\a_\theta
    =
    \frac{1}{2}\frac{1}{\sigma^2+1}.
\end{align}
Invoking the Sherman-Morrison identity, we get after few manipulations,
\begin{align}
    \a_\theta^H\!(\Rh+\I)^{-1}\a_\theta -  \frac{ \sigma^2\!\left(\!\a_\theta^H\!(\Rh\!+\!\I)^{-1} \a_\theta\!\right)^2}{1\!+\!\sigma^2\a_\theta^H\!(\Rh\!+\!\I)^{-1}\a_\theta}
    =
    \frac{1}{2}\frac{1}{\sigma^2\!+\!1}.
\end{align}
Simplifying the left-hand terms yields
\begin{align}
    \frac{\a_\theta^H(\Rh+\I)^{-1}\a_\theta}{1+\sigma^2\a_\theta^H(\Rh+\I)^{-1}\a_\theta}
    =
    \frac{1}{2}\frac{1}{\sigma^2+1}.
\end{align}
Rearranging the latter equation and setting $P_\text{LD}(\theta)=\sigma^2$, we get the Log-Determinant beamformer
\begin{align}
    P_\text{LD}(\theta) = \frac{1}{\a_\theta^H(\Rh+\I)^{-1}\a_\theta} - 2.
\end{align}
It achieves the proof of (\ref{LDPowerSpectrum}).

We now check that the Log-Determinant metric is indeed invariant to matrix inversion. To that end, we evaluate:
\begin{align}
    &d_\text{LD}^2\left( \!\Rh{}^{-1}\!\!,\R{}^{-1} \!\right)\!=\!\nonumber
    \\
    &\log\!\left(\!\Det\left\{\!\!\frac{\Rh{}^{-1}\!\!+\!\R{}^{-1}}{2}\!\!\right\}\!\right) - 
    \frac{\log\!\left(\!\Det\left\{\!\Rh{}^{-1}\R{}^{-1}\!\right\}\!\right)}{2}=
    \\
    &\log\!\left(\!\Det\left\{\!\!\frac{\Rh{}^{-1}(\Rh\!\!+\!\R)\R{}^{-1}}{2}\!\!\right\}\!\right) + 
    \frac{\log\!\left(\!\Det\left\{\!\Rh\R\!\right\}\!\right)}{2}=
    \\
    &\log\!\!\left(\!\!\Det\left\{\!\!\frac{\Rh\!\!+\!\R}{2}\!\!\right\}
    \!\!\Det\left\{\!\Rh{}^{-1}\R{}^{-1}\!\right\}
    \!\!\!\right) \!+ \!
    \frac{\log\!\left(\!\!\Det\left\{\!\Rh\R\!\right\}\!\!\right)}{2}=
    \\
    &\log\!\!\left(\!\!\Det\left\{\!\!\frac{\Rh\!\!+\!\R}{2}\!\!\right\}
    /\Det\left\{\!\Rh\R\!\right\}
    \!\!\right) \!+ \!
    \frac{\log\!\left(\!\!\Det\left\{\!\Rh\R\!\right\}\!\!\right)}{2}=
    \\
    &\log\!\!\left(\!\!\Det\left\{\!\!\frac{\Rh\!\!+\!\R}{2}\!\!\right\}
    \!\!\right) \!\!-\! \!
    \log\!\left(\!\!\Det\left\{\!\Rh\R\!\right\}\!\!\right)
    \!\!+\!\!
    \frac{\log\!\left(\!\!\Det\left\{\!\Rh\R\!\right\}\!\!\right)}{2}
    \\
    &=d_\text{LD}^2\left( \Rh,\R \right).
\end{align}
This results proves that the Log-Determinant metric is invariant to inversion. 

\end{document}